\newtheorem{theorem}{Theorem}[section]
\newtheorem{proposition}[theorem]{Proposition}
\newtheorem{lemma}[theorem]{Lemma} 
\newtheorem{corollary}[theorem]{Corollary}
\theoremstyle{definition}
\newtheorem{example}[theorem]{Example}
\newtheorem{claim}{Claim}
\newenvironment{exercise*}
{\exercise}
{\endexercise}
\newtheorem{remark}[theorem]{Remark}
\newtheorem{definition}[theorem]{Definition}
\newcommand{\Rb}{\operatorname{Rb}}
\newcommand{\Id}{\textrm{Id}}
\newcommand{\R}{\mathbb{R}}
\newcommand{\Pp}{\mathcal{P}}
\newcommand{\Q}{\mathcal{Q}}
\newcommand{\I}{\mathcal{I}}
\newcommand{\sse}{\subseteq}
\title{Reeb Graph of Sample Thickenings}
\author{H\aa vard Bakke Bjerkevik\thanks{Department of Mathematics \& Statistics, University at Albany, SUNY, USA; hbjerkevik@albany.edu}\quad\quad Nello Blaser\thanks{Department of Informatics, University of Bergen, Norway; nello.blaser@uib.no}\quad\quad Lars M. Salbu\thanks{Department of Informatics, University of Bergen, Norway; lars.salbu@uib.no}}
\begin{document}

\maketitle
\begin{abstract}
We consider the Reeb graph of a thickening of points sampled from an unknown space. Our main contribution is a framework to transfer reconstruction results similar to the well-known work of Niyogi, Smale, and Weinberger to the setting of Reeb graphs. To this end, we first generalize and study the interleaving distances for Reeb graphs. We find that many of the results previously established for constructible spaces also hold for general topological spaces. We use this to show that under certain conditions for topological spaces with real-valued Lipschitz maps, the Reeb graph of a sample thickening approximates the Reeb graph of the underlying space. Finally, we provide an algorithm for computing the Reeb graph of a sample thickening. 
\end{abstract}

\section{Introduction}

Collecting data through experiments typically does not give complete information about the system. Instead we get a \emph{finite} set of data points sampled from a larger space, and we want to study properties of this underlying space. \emph{Geometric reconstruction} concerns the problem of recovering topological information, like homology/homotopy groups \cite{Chazal2007,Fasy2022,Brun2023,chazal2008towards} or even the homotopy type \cite{Niyogi2008,attali2024tight,attali2013,wang2018,Fasy2022,kim2020,adams2019}, of an unknown space by considering a finite set of sampled points.
Assumptions on the underlying space are needed to ensure reconstruction, usually based on geometric properties like the (local) ($\mu-$)reach \cite{federer1959,Amenta1999,Chazal2009}, distortion \cite{Gromov1978}, convexity radius \cite{Hausmann1996} or weak feature size \cite{Chazal2007}. Additionally, the samples need to be dense and well-distributed, and are only sometimes allowed sample noise (e.g.\ \cite{attali2024tight}).

A classical way of describing shapes is with Reeb graphs \cite{biasotti2008}. Given an $\mathbb{R}$-space, namely a topological space \(X\) with a continuous function \(f: X \to \mathbb{R}\), the Reeb graph \(\Rb(X, f)\) is the quotient space constructed from $X$ by identifying points in the same connected component of level sets $f^{-1}(a)$ for $a\in\mathbb{R}$. Introduced by Reeb in 1946 \cite{Reeb1946}, it has found applications in diverse fields, ranging from computer graphics (see survey \cite{biasotti2008}) to neuroscience (e.g.\ \cite{Shailja2024,Shi2014}).

We consider the problem of approximating the Reeb graph of an unknown space from a sample and give a framework to transfer reconstruction results to this setting. In particular, we look at results that recover the homotopy type by constructing a larger space that deformation retracts to the unknown space, typically by thickening the point cloud \cite{Niyogi2008,attali2024tight,wang2018}.

To measure approximation quality, we need to compare Reeb graphs. Reeb distances include the bottleneck \cite{Cohen-Steiner2009},  interleaving \cite{de_Silva_2016,bauer_2015}, functional distortion and contortion distances \cite{bauer_2014,bauer_2022}, the Reeb radius \cite{curry2024}, and the universal distance \cite{bauer_2021}. A lot of previous work compares these distances \cite{bauer_2014,bauer_2015,bauer_2022,bollen2022,carrière2017} which are often defined for special Reeb graphs. For general $\mathbb{R}$-spaces, we use the interleaving distance with connected components (also discussed in \cite{de_Silva_2016} where they mainly consider path components). This distance has a general Reeb stability result (\cref{prop: Reeb_stability}), that yields approximation results without any extra assumption. With additional assumptions, our methods also works for other distances (\cref{remark: approximation_for_other_distances}).

Prior work on approximating Reeb graphs from samples includes that the Reeb graph of the Vietoris-Rips complex of a dense sample of a smooth compact manifold $M$ can approximate the Reeb graph $\Rb(M,f)$ where $f$ is level-set-tame Lipschitz \cite[Thm.\ 4.7]{Dey2013}. Mapper \cite{singh2007Mapper} can be seen as a discretized approximation of the Reeb graph. In particular, in \cite[Cor.\ 6]{munch2016} and \cite[Cor.\ 1]{Brown2021} they show that the geometric and enhanced Mapper, respectively, are close in interleaving distance to the Reeb graph for constructable $\mathbb{R}$-spaces. Moreover, in \cite[Thm.\ 7]{carriére2018} they show that Mapper  is close in bottleneck distance to the Reeb graph of Morse-type functions on spaces with positive reach and convexity radius.

On the computational front, an early contribution was \cite{Shinagawa1991} where they found the Reeb graph of Morse functions on triangulated 2-manifolds in $O(n^2)$, where $n$ is the number of triangles. It was later improved to $O(n \log n)$ in \cite{Cole-McLaughlin2004}. For the more general case of PL functions on simplicial complexes, randomized \cite{harvey2010} and later deterministic \cite{Parsa2013} $O(m \log m)$ algorithms have been suggested, where $m$ is the size of the simplicial $2$-skeleton.

\subsection{Contributions}
Our contributions are as follows: 
\begin{enumerate}
    \item In \cref{prop: Reeb_stability} we show that
    for continuous functions $f_1,f_2:X\to\mathbb{R}$, the interleaving distance between Reeb graphs \(\Rb(X,f_1)\) and \(\Rb(X,f_2)\) is bounded by \(\|f_1-f_2\|_\infty\). This generalizes results from \cite{de_Silva_2016}, where this is shown for constructible $\mathbb{R}$ spaces, to general $\mathbb{R}$-spaces. This result is interesting in itself, and necessary to prove subsequent results. 
    \item Our main result is a template for creating Reeb approximation results (\cref{thm: reconstruct-ideal}). Given a continuous function $g:Y\to\mathbb{R}$ and a deformation retraction $H:Y\times [0,1]\to Y$ of $X\subseteq Y$, such that $h_1\circ h_t = h_1$ for all $0\leq t\leq 1$, we found that the interleaving distance \(d_\mathrm{I}(\Rb(Y,g),\Rb(X,g|_X))\) is bounded by \(\|g-g \circ h_1\|_\infty\).
    \item As a direct consequence of the above result together with reconstruction results a la Niyogi, Smale, and Weinberger \cite{Niyogi2008}, we find for example that the Reeb graph of a closed Euclidean subset $X\subseteq\mathbb{R}^d$ with positive reach \(\tau(X)\) can be approximated from the Reeb graph of a sample \(A\subseteq X^\delta\) under some conditions (\cref{cor: closed_set_approximation1,cor: closed_set_approximation2,cor: closed_set_approximation3}). Similar results also hold for closed subsets of Riemannian manifolds (\cref{sec: riemann}). 
    \item An algorithm computing the Reeb graph of an \(\varepsilon\)-thickening of a set of points \(A \subset \mathbb{R}^d\) for linear functions \(f \colon \mathbb{R}^d \to \R\). Given \(n=|A|\) points and \(t \le n^2\) overlapping \(\varepsilon\)-balls, the algorithm runs in $O(n(n+t) \alpha(n))$ time, where $\alpha(n)$ is the inverse Ackermann function. 
\end{enumerate}

In terms of $n$, our algorithm runs in $O(n^3\alpha(n))$ in the worst case $t=\Theta(n^2)$.
This is not directly comparable to the algorithms mentioned above, since they take as input triangulated manifolds or simplicial complexes, while we take as input only a set of points.
Still, we observe that these algorithms run in $O(n^3\log n)$ or worse, where $n$ is the number of $0$-simplices.
Our algorithm compares favorably to this, since we replace a logarithmic factor with $\alpha(n)$, and we do not have to spend time to explicitly compute a triangulation or simplicial complex.

The paper is structured as follows. \cref{sec: reebfunctor} introduces \(\mathbb{R}\)-spaces and Reeb functors, and shows that $\mathbb{R}$-spaces and their Reeb graphs have the same number of connected components. 
In \cref{sec: interleaving}, we generalize the interleaving distance to all \(\mathbb{R}\)-spaces, concluding in a stability result (Contribution~1). In \cref{sec: reeb_approximation_framework}, we present our Reeb approximation template (Contribution~2) and in \cref{sec: reconstruction} we apply it to closed Euclidean subsets and closed subsets of Riemannian manifolds (Contribution~3). 
\cref{sec: computation} provides an algorithm to compute Reeb graphs from thickenings and its analysis (Contribution~4).

\subsection{Acknowledgments}

This work was in part funded by the European Union, GA\#101126560; Bergen research and training program for future AI leaders across the disciplines, LEAD AI. 
Part of this paper is based upon work supported by the Swedish Research Council under grant no. 2021-06594 while some of the authors were in residence at Institut Mittag-Leffler in Djursholm, Sweden during the summer of 2025.

\section{The Reeb Functor}\label{sec: reebfunctor}
In this section we give the necessary background, defining $\mathbb{R}$-spaces and their Reeb graphs. We work in the categorical setting, like in \cite{de_Silva_2016}.

An \textbf{$\mathbb{R}$-space} is a pair $(X,f)$ consisting of a topological space $X$ together with a continuous function $f:X\to\mathbb{R}$ \cite{de_Silva_2016}. We call $X$ the \textbf{underlying space} of the $\mathbb{R}$-space $(X,f)$. A \textbf{morphism} of $\mathbb{R}$-spaces $G:(Y,g)\to(X,f)$ is a function-preserving continuous function on underlying spaces, i.e.\ it is a continuous map $G:Y\to X$ such that $g=f\circ G$. We denote the \textbf{category of $\mathbb{R}$-spaces} by $\mathbf{Top}/\mathbb{R}$.

The \textbf{level-set} of an $\mathbb{R}$-space $(X,f)$ at level $a\in\mathbb{R}$ is the preimage 
    $f^{-1}(a)=\{x\in X\,|\, f(x)=a\}\subseteq X$
with the subspace topology. We define an equivalence relation $\sim_f$ on the underlying space $X$ by saying $x\sim_f y$ if and only if $x$ and $y$ are in the same connected component of the same level-set $f^{-1}(a)$. We denote the equivalence class of $x$ by $[x]_f$. The quotient space $X/_{\sim_f}$ is called the \textbf{Reeb graph} of $f$ \cite[Thm.\ 1]{Reeb1946}. Note that
\begin{equation}\label{eq: equivalent_same_value}
    \textrm{$f(x)=f(y)$ whenever $x\sim_f y$,}    
\end{equation}
so $f$ induces a continuous map $\Tilde{f} : X/_{\sim_f}\to\mathbb{R}$ defined by $[x]_f\mapsto f(x)$. In particular, the quotient map $q_f:X\to X/_{\sim_f}$ defines a morphism of $\mathbb{R}$-spaces $(X,f)\to (X/_{\sim_f},\Tilde{f})$. 

Let $G:(Y,g)\to (X,f)$ be a morphism of $\mathbb{R}$-spaces. If $g(y)=a$, then $f(G(y))=a$ so $G(y)$ is in $f^{-1}(a)$. Furthermore, since continuous maps send connected sets to connected sets \cite[Thm.\ 23.5]{Munkres_2014}, we get that if $y$ and $y'$ are in the same connected component of $g^{-1}(a)$, then $G(y)$ and $G(y')$ are in the same connected component of $f^{-1}(a)$. In particular, the mapping $[y]_g\mapsto [G(y)]_f$ defines a continuous map on quotient spaces $\Tilde{G}:Y/_{\sim_g}\to X/_{\sim_f}$ such that $\Tilde{g}=\Tilde{f}\circ\Tilde{G}$. Following \cite[Sec.\ 2.4]{de_Silva_2016}, we define the \textbf{Reeb functor} 
$\Rb:\mathbf{Top}/\mathbb{R}\to\mathbf{Top}/\mathbb{R}$,
sending object $(X,f)$ to $(X_{\sim_f},\Tilde{f})$ and morphism $G$ to $\Tilde{G}$. Abusing notation, we often write $\Rb(X,f)$ for the underlying space $X/_{\sim_f}$, leaving the map $\Tilde{f}$ implicit.

\begin{remark}
    Reeb graphs are not generally (topological) graphs \cite[Ex.\ 2.4]{Gelbukh2024}, but for some important classes of $\mathbb{R}$-spaces they are. For example, the Reeb graph $\Rb(X,f)$ is a topological graph under the following conditions:
    \begin{itemize}
        \item For a compact manifold $X$ with a Morse function $f$ \cite[Thm.\ 1]{Reeb1946}.
        \item For a connected, compact and triangulable space $X$ with $f$ continuous \cite[Def.\ 2.5]{bauer_2015}.
        \item For a constructable $\mathbb{R}$-space $(X,f)$ \cite[Sec.\ 2]{de_Silva_2016}.
    \end{itemize}
\end{remark}

In the following lemma, we see that the Reeb graph of the preimage $f^{-1}(U)$ is the same as the preimage $\Tilde{f}^{-1}(U)$ of the induced map $\Tilde{f}:\Rb(X,f)\to\mathbb{R}$ for open subsets $U\subseteq\mathbb{R}$.

\begin{lemma}\label{lemma: subspace-quotient-lemma}
    If $f:X\to \mathbb{R}$ is a continuous map and $U\subseteq \mathbb{R}$ is an open subset, then
    \begin{equation*}
        \Rb(f^{-1}(U),f)=q_f(f^{-1}(U))=\Tilde{f}^{-1}(U).
    \end{equation*}
\end{lemma}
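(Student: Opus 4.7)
The plan is to establish the two equalities separately, with the second being essentially formal and the first requiring a standard ``saturation'' argument.

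For the second equality $q_f(f^{-1}(U))=\tilde{f}^{-1}(U)$, I would simply use that $f=\tilde{f}\circ q_f$. Then $x\in f^{-1}(U)$ iff $\tilde{f}(q_f(x))\in U$ iff $q_f(x)\in\tilde{f}^{-1}(U)$, and since $q_f$ is surjective every class in $\tilde{f}^{-1}(U)$ arises this way. This gives both inclusions immediately.

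For the first equality, the key observation is that $f^{-1}(U)$ is a \emph{saturated} open subset of $X$ with respect to $\sim_f$: by \eqref{eq: equivalent_same_value} every $\sim_f$-class is contained in a single level set, hence either lies entirely in $f^{-1}(U)$ or is disjoint from it, so $q_f^{-1}(q_f(f^{-1}(U))) = f^{-1}(U)$. Moreover the level sets of the restriction $f|_{f^{-1}(U)}$ at values $a\in U$ coincide with the level sets $f^{-1}(a)$, while for $a\notin U$ they are empty. Hence the equivalence relation $\sim_{f|_{f^{-1}(U)}}$ on $f^{-1}(U)$ equals the restriction of $\sim_f$, so the underlying sets of $\Rb(f^{-1}(U),f)$ and $q_f(f^{-1}(U))$ agree.

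The only remaining point, and the one that needs care, is that the quotient topology on $\Rb(f^{-1}(U),f)$ matches the subspace topology inherited from $\Rb(X,f)$. I would invoke the standard fact that passing to a quotient commutes with restriction to an open saturated subset: the inclusion and the two quotient maps induce a continuous bijection $\psi:\Rb(f^{-1}(U),f)\to q_f(f^{-1}(U))$, and to see it is open, take $V$ open in $\Rb(f^{-1}(U),f)$; then $q_{f|_{f^{-1}(U)}}^{-1}(V)$ is open in $f^{-1}(U)$ and hence in $X$, and by saturation it equals $q_f^{-1}(\psi(V))$, so $\psi(V)$ is open in $\Rb(X,f)$. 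This is the only non-formal step, and it is routine once the saturation of $f^{-1}(U)$ is in hand.
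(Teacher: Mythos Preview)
Your proof is correct and follows essentially the same approach as the paper: both arguments hinge on the saturation of $f^{-1}(U)$ with respect to $\sim_f$ (via \eqref{eq: equivalent_same_value}) to identify the underlying sets, and then use that restricting a quotient map to an open saturated subset yields a quotient map to match the topologies. The paper simply cites this last fact from Munkres, whereas you spell out the openness of the induced bijection directly, but the content is the same.
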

\begin{proof}
    We write $A= f^{-1}(U)$. From \eqref{eq: equivalent_same_value} we get that 
    \begin{equation}\label{eq: A_saturated}
        \textrm{$x\in A$ and $x\sim_f y$} \implies y\in A.
    \end{equation}
    So, $[x]_f=[x]_{f|_A}$ for all $x\in A$. Furthermore, since $\Tilde{f}([x]_f)=f(x)$, all three spaces in question contain the same elements, namely the equivalence classes 
     \begin{equation*}
        [x]_f=\{y\,|\,\textrm{$x$ and $y$ are in the same connected component of $f^{-1}(f(x))$}\},
    \end{equation*}
    where $x$ is in $A$. To show that the topologies agree, we note that restricting a quotient map to an open subset satisfying \eqref{eq: A_saturated} gives a quotient map \cite[Thm.\ 22.1(1)]{Munkres_2014}. In particular, we get $q_{(f|_A)}=(q_f)|_A$, and the spaces $\Rb(A,f)$ and $q_f(A)$ have the same topology. The result now follows, as both $q_f(A)$ and $\Tilde{f}^{-1}(U)$ have the subspace topology and are equal as sets.
\end{proof}

To prove that the Reeb graph functor preserves connected components we first show that the image of connected components under the Reeb quotient map are closed and connected.

\begin{lemma}\label{lemma: component_lemma 0}
    Let $f:X\to\mathbb{R}$ be continuous with Reeb quotient map $q_f:X\to\Rb(X,f)$. If $B$ is a connected component of $X$, then $q_f(B)$ is closed and connected in $\Rb(X,f)$.
\end{lemma}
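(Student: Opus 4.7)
The plan is to handle connectedness and closedness separately, since connectedness is essentially automatic while closedness requires a small observation about saturations.

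First, for connectedness, I would simply note that $q_f\colon X\to\Rb(X,f)$ is continuous and that continuous images of connected sets are connected (e.g.\ \cite[Thm.\ 23.5]{Munkres_2014}), so $q_f(B)$ is connected in $\Rb(X,f)$.

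For closedness, since $q_f$ is a quotient map, $q_f(B)$ is closed in $\Rb(X,f)$ if and only if its saturation $q_f^{-1}(q_f(B))=\bigcup_{x\in B}[x]_f$ is closed in $X$. The key step I would carry out is to show that this saturation equals $B$ itself. Indeed, for any $x\in B$, the equivalence class $[x]_f$ is a connected subset of the level set $f^{-1}(f(x))$, and therefore also a connected subset of $X$ (connectedness depends only on the subspace topology, which agrees whether we view $[x]_f\subseteq f^{-1}(f(x))$ or $[x]_f\subseteq X$). Since $B$ is the connected component of $X$ containing $x$ and connected components are maximal connected subsets, we conclude $[x]_f\subseteq B$. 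Hence $q_f^{-1}(q_f(B))=B$.

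Finally, since connected components of a topological space are always closed, $B$ is closed in $X$, so $q_f^{-1}(q_f(B))=B$ is closed, and therefore $q_f(B)$ is closed in $\Rb(X,f)$. The only real subtlety, and the step I would be most careful to spell out, is the observation that each equivalence class $[x]_f$ is connected as a subspace of $X$, so that the maximality of connected components forces $[x]_f\subseteq B$ and gives the saturation identity $q_f^{-1}(q_f(B))=B$. Everything else is a direct appeal to the definitions of quotient maps and connected components.
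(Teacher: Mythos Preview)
Your proposal is correct and follows essentially the same approach as the paper: connectedness via continuous images, and closedness by showing the saturation $q_f^{-1}(q_f(B))$ equals $B$ (using that each class $[x]_f$ is connected and hence contained in the connected component $B$), which is closed. The paper phrases the saturation step as ``if $[x]_f\in q_f(B)$ then $x\in B$'' rather than ``$[x]_f\subseteq B$ for $x\in B$'', but these are equivalent and the underlying idea is identical.
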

\begin{proof}
    The set $q_f(B)$ is connected as the continuous image of a connected set \cite[Thm.\ 23.5]{Munkres_2014}. Let $x\in X$ be such that $[x]_f\in q_f(B)$, so $x\sim_f b$ for some $b\in B$. In particular, $x$ and $b$ are in the same connected component $X$, i.e.\ $x$ is in $B$. Thus the preimage $q_f^{-1}(q_f(B))=B$ is closed \cite[Thm.\ V.3.2(3)]{dugundji1966}, and $q_f(B)$ is closed in the quotient topology.
\end{proof}
We get a similar result considering preimages of the Reeb quotient map:
\begin{lemma}[Generalizing {\cite[Prop.\ 3]{bauer_2021}}]\label{lemma: component lemma 1}
    Let $f:X\to\mathbb{R}$ be continuous with Reeb quotient map $q_f:X\to\Rb(X,f)$. If $K$ is closed and connected in $\Rb(X,f)$, $q_f^{-1}(K)$ is connected in X.
\end{lemma}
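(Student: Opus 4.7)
The plan is to argue by contradiction. Suppose $q_f^{-1}(K)$ is disconnected, so we can write $q_f^{-1}(K) = A \sqcup B$ with $A$ and $B$ disjoint, nonempty, and closed in $q_f^{-1}(K)$. Since $K$ is closed in $\Rb(X,f)$ and $q_f$ is continuous, $q_f^{-1}(K)$ is closed in $X$, and hence $A$ and $B$ are closed in $X$.

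Next, I would exploit the defining property of the Reeb quotient: every fiber $q_f^{-1}([x]_f)$ is, by construction, a connected component of a level set $f^{-1}(f(x))$, and in particular is connected. Therefore each such fiber that meets $q_f^{-1}(K)$ must lie entirely in $A$ or entirely in $B$. This means $A$ and $B$ are saturated with respect to $q_f$, so $q_f(A)$ and $q_f(B)$ are disjoint nonempty subsets of $\Rb(X,f)$ whose union is $K$.

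Finally, I would invoke the standard fact that a quotient map sends closed saturated sets to closed sets (since $q_f^{-1}(q_f(A)) = A$ is closed by saturation, $q_f(A)$ is closed in the quotient topology; similarly for $B$). Hence $q_f(A)$ and $q_f(B)$ are closed in $\Rb(X,f)$, and therefore closed in $K$, contradicting the connectedness of $K$.

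The only subtle point is the verification that $A$ and $B$ are genuinely saturated; once that is in hand, the quotient-map argument that turns closedness of $A,B$ in $X$ into closedness of their images in $\Rb(X,f)$ is routine, exactly analogous to the saturation argument used in \cref{lemma: subspace-quotient-lemma}. I do not foresee any real obstacle beyond being careful with the point-set topology.
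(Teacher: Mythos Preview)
Your argument is correct and is in fact cleaner than the paper's. Both proofs start the same way, writing $q_f^{-1}(K)=A\sqcup B$ with $A,B$ nonempty closed subsets of $X$. The paper then branches on whether $q_f(A)$ and $q_f(B)$ are disjoint: in the non-disjoint case it passes to a connected component of $X$ and invokes \cref{lemma: component_lemma 0}, while the disjoint case is essentially your final step. You bypass this case analysis by observing directly that each fiber $q_f^{-1}([x]_f)$ is a connected component of a level set, hence connected, so $A$ and $B$ are automatically saturated and $q_f(A),q_f(B)$ are automatically disjoint. This eliminates the need for \cref{lemma: component_lemma 0} and collapses the argument to the disjoint case only. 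Your route is shorter and has one fewer lemma dependency; the paper's route, by contrast, makes the role of connected components of $X$ more explicit, which ties in with how \cref{lemma: component_lemma 0} and \cref{lemma: component lemma 1} are used together in \cref{prop: Reeb_component}.
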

\begin{proof}
    We modify the proof of \cite[Prop.\ 3]{bauer_2021} slightly. Let $q_f^{-1}(K)=U\cup V$ where $U$ and $V$ are non-empty and closed in $q_f^{-1}(K)$, and thus closed in $X$ as $q_f^{-1}(K)$ is closed. Assume by contradiction that $U$ and $V$ are disjoint. 
    
    If $q_f(U)$ and $q_f(V)$ are \underline{not} disjoint, then there is a class $[x]_f\in q_f(U)\cup q_f(V)$, so there exist points $u\in U$ and $v\in V$ in the same connected component of $f^{-1}(f(x))$. This implies that $u$ and $v$ are in the same connected component $B$ of $X$. The image $q_f(B)$ is connected by \cref{lemma: component_lemma 0} and it intersects $K$ in $[x]_f$. Since $K$ is a connected component, we get $q_f(B)\subseteq K$ and thus $B\subseteq q_f^{-1}(K)=U\cap V$. Now $B$ is covered by the non-empty disjoint closed sets $(U\cap B)$ and $(V \cup B)$, contradicting connectivity of $B$.

    If $q_f(U)$ and $q_f(V)$ \underline{are} disjoint, note that $K$ is covered by non-empty disjoint sets, since
    \begin{equation*}
        K=q_f(q_f^{-1}(K))=q_f(U\cup V)=q_f(U)\cup q_f(V).
    \end{equation*}
    Consider $x\in X$ such that $[x]_f$ is in $q_f(U)\subseteq K$. In particular, $[x]_f$ is not in $q_f(V)$ and $x$ is not in $V$. Since $x\in q_f^{-1}(K)=U\cup V$, we conclude that $x\in U$, and so $q_f^{-1}(q_f(U))=U$, which is closed. Thus $q_f(U)$ is closed in the quotient topology, and by a similar argument so is $q_f(V)$. These sets form a non-empty closed disjoint cover of $K$, contradicting its connectivity. 
    
    We conclude that $U$ and $V$ must intersect, and that $q_f^{-1}(K)$ is connected.
\end{proof}

Let $C_0:\mathbf{Top}\to\mathbf{Set}$ be the functor sending topological spaces to their set of connected components. If $G:X\to Y$ is continuous and $B$ is a connected component of $X$, then $C_0(G)(B)$ is the connected component of $Y$ containing $B$ (well-defined by \cite[Thm.\ V.3.3]{dugundji1966}). 

\begin{proposition}\label{prop: Reeb_component}
    Let $(X,f)$ be an $\mathbb{R}$-space. The map $q_f:X\to \Rb(X,f)$ induces a bijection $C_0(X)\xrightarrow{\simeq} C_0(\Rb(X,f))$ that is natural in the sense that if $G:(Y,g)\to(X,f)$ is a morphism of $\mathbb{R}$-spaces, then we have a commuting diagram
    \begin{equation*}
        \begin{tikzcd}
            C_0(Y)\arrow[rr,"C_0(q_g)"]\arrow[rr,"\simeq",swap]\arrow[d,"C_0(G)"]&& C_0(\Rb(Y,g))\arrow[d,"C_0(\Rb(G))"]\\
            C_0(X)\arrow[rr,"C_0(q_{f})"]\arrow[rr,"\simeq",swap]&& C_0(\Rb(X,f)).
        \end{tikzcd}
    \end{equation*}
\end{proposition}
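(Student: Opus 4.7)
The strategy is to check bijectivity of $C_0(q_f)$ directly from the two preceding lemmas, and then deduce naturality essentially for free from the functoriality of $\Rb$ and the identity $q_f\circ G = \Rb(G)\circ q_g$.

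For surjectivity I would take any connected component $K$ of $\Rb(X,f)$, pick a class $[x]_f\in K$, and let $B$ be the connected component of $X$ containing $x$. Then \cref{lemma: component_lemma 0} ensures $q_f(B)$ is connected, and since it meets $K$ at $[x]_f$, maximality of the connected component $K$ forces $q_f(B)\subseteq K$, giving $C_0(q_f)(B)=K$.

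For injectivity, suppose $B_1$ and $B_2$ are connected components of $X$ with $C_0(q_f)(B_i)=K$ for $i=1,2$. Then $B_i\subseteq q_f^{-1}(q_f(B_i))\subseteq q_f^{-1}(K)$. The component $K$ is closed in $\Rb(X,f)$ (since connected components in any topological space are closed) and connected by definition, so \cref{lemma: component lemma 1} gives that $q_f^{-1}(K)$ is a connected subset of $X$. Maximality of $B_1$ and $B_2$ as connected components then forces $q_f^{-1}(K)\subseteq B_1$ and $q_f^{-1}(K)\subseteq B_2$, so $B_1=q_f^{-1}(K)=B_2$.

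For naturality, it suffices to verify that $q_f\circ G=\Rb(G)\circ q_g$ on $Y$, which is immediate from the defining formula $\Rb(G)([y]_g)=[G(y)]_f=q_f(G(y))$; applying the functor $C_0$ to both sides yields the claimed commuting square, with the horizontal maps already shown to be bijections. The only ingredients I expect to need beyond the two lemmas are the standard facts that connected components are closed and that a connected subset meeting a connected component is contained in it, so I do not foresee a serious obstacle; the key conceptual step is recognizing that injectivity is precisely what \cref{lemma: component lemma 1} is designed to deliver.
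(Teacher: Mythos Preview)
Your proposal is correct and follows essentially the same route as the paper: surjectivity via the image $q_f(B)$ being connected and meeting $K$, injectivity via \cref{lemma: component lemma 1} applied to the closed connected component $K$, and naturality from $q_f\circ G=\Rb(G)\circ q_g$ plus functoriality of $C_0$. If anything you are slightly more explicit than the paper in noting that $K$ is closed (needed to invoke \cref{lemma: component lemma 1}); otherwise the arguments match.
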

\begin{proof}   
    \textit{Surjectivity:} Consider a connected component $K$ of $\Rb(X,f)$. Let $[x]_f$ be a class in $K$ represented by $x$, and let $B$ be the connected component of $x$ in $X$. The set $q_f(B)$ is connected \cite[Thm.\ 23.5]{Munkres_2014}, intersecting the connected component $K$ in $[x]_f$, so $q_f(B)\subseteq K$.
    
    \textit{Injectivity:} Let $B_1$ and $B_2$ be disjoint connected components of $X$. Assume by contradiction that $q_f(B_1)$ and $q_f(B_2)$ are in the same connected component $K$ of $\Rb(X,f)$. By \cref{lemma: component lemma 1} the preimage $q_f^{-1}(K)$ is connected, and it intersects both $B_1$ and $B_2$. Thus $B_1=B_2$, which is a contradiction.

    \textit{Naturality:} Consider $G:(Y,g)\to (X,f)$ and let $y$ be any element in $Y$. We note that 
    \begin{equation*}
        q_f\circ G(y)=[G(y)]_f=\Rb(G)([y]_g)=\Rb(G)\circ q_g(y),
    \end{equation*}
    and naturality follows from functoriality of $C_0$.
\end{proof}

\section{Reeb Precosheaf and Interleaving} \label{sec: interleaving}

We now consider how to compare \(\mathbb{R}\)-spaces. In particular, we define the interleaving distance between $\mathbb{R}$-spaces, and show a Reeb stability result. This generalizes results from \cite{de_Silva_2016}.

Let $\mathbf{Int}$ be the category of open intervals of $\mathbb{R}$ with inclusions as morphisms. For $\delta\geq 0$, we have a \textbf{widening functor} $\omega_\delta:\mathbf{Int}\to\mathbf{Int}$ sending intervals $(a,b)$ to their \textbf{$\delta$-thickening} $(a-\delta,b+\delta)$. We say that two functors $C,D:\mathbf{Int}\to\mathcal{A}$ are \textbf{$\delta$-interleaved} if there are natural transformations $\Phi:C\to D \circ \omega_\delta$ and $\Psi:D\to C \circ \omega_\delta$ such that $\Psi_{\omega_\delta (I)}\circ \Phi_I = C(I\subseteq \omega_{2\delta}(I))$ and $\Phi_{\omega_\delta (I)}\circ \Psi_I = D(I\subseteq \omega_{2\delta}(I))$. In this case we say that the pair $(\Phi,\Psi)$ is a \textbf{$\delta$-interleaving} between $C$ and $D$. Note that $C$ and $D$ are $0$-interleaved if and only if they are naturally isomorphic. The \textbf{interleaving distance} \cite[Def.\ 4.2]{de_Silva_2016} between $C$ and $D$ is
\begin{equation*}
    d_\mathrm{I}(C,D)=\inf \{\delta\,|\,\textrm{ $C$ and $D$ are $\delta$-interleaved}\}.
\end{equation*}

\begin{proposition}\label{prop:functor-shrink-interleaving}
    Let $C,D:\mathbf{Int}\to\mathcal{A}$ and $H:\mathcal{A}\to\mathcal{B}$ be functors, then
    \begin{equation*}
        d_\mathrm{I}(H\circ C,H\circ D)\leq d_\mathrm{I}(C,D).
    \end{equation*}
\end{proposition}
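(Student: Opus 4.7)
The plan is to show that any $\delta$-interleaving between $C$ and $D$ can be pushed forward along $H$ to yield a $\delta$-interleaving between $H \circ C$ and $H \circ D$, so the set of $\delta$'s over which the infimum defining $d_\mathrm{I}(H \circ C, H \circ D)$ is taken contains the corresponding set for $d_\mathrm{I}(C,D)$.

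First I would fix $\delta \geq 0$ and assume $(\Phi, \Psi)$ is a $\delta$-interleaving between $C$ and $D$, with $\Phi : C \to D \circ \omega_\delta$ and $\Psi : D \to C \circ \omega_\delta$. The key observation is that applying the functor $H$ to a natural transformation yields a natural transformation between the composed functors. So I would define $\Phi^H : H \circ C \to H \circ D \circ \omega_\delta$ and $\Psi^H : H \circ D \to H \circ C \circ \omega_\delta$ by $\Phi^H_I = H(\Phi_I)$ and $\Psi^H_I = H(\Psi_I)$ for every interval $I$. Naturality of $\Phi^H$ and $\Psi^H$ follows immediately from the naturality of $\Phi$ and $\Psi$ together with functoriality of $H$.

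Next I would verify the interleaving identities. Functoriality of $H$ gives
\begin{equation*}
    \Psi^H_{\omega_\delta(I)} \circ \Phi^H_I = H(\Psi_{\omega_\delta(I)}) \circ H(\Phi_I) = H(\Psi_{\omega_\delta(I)} \circ \Phi_I) = H(C(I \subseteq \omega_{2\delta}(I))) = (H \circ C)(I \subseteq \omega_{2\delta}(I)),
\end{equation*}
and symmetrically for the other composite. Thus $(\Phi^H, \Psi^H)$ is a $\delta$-interleaving between $H \circ C$ and $H \circ D$.

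Finally, this shows that every $\delta$ for which $C$ and $D$ are $\delta$-interleaved is also a $\delta$ for which $H \circ C$ and $H \circ D$ are interleaved. Taking the infimum over all such $\delta$ on both sides yields the desired inequality. There is no real obstacle here; the only thing to be careful about is writing down the naturality square correctly for $\Phi^H$, but that reduces to applying $H$ to the naturality square of $\Phi$, which is a single invocation of functoriality.
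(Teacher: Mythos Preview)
Your proposal is correct and follows essentially the same approach as the paper: push a $\delta$-interleaving $(\Phi,\Psi)$ forward along $H$ by setting $(H\Phi)_I = H(\Phi_I)$ and $(H\Psi)_I = H(\Psi_I)$, then verify the interleaving identities by functoriality of $H$. The only difference is cosmetic notation ($\Phi^H$ versus $H\Phi$) and your explicit mention of the infimum step, which the paper leaves implicit.
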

\begin{proof}
    The proof is similar to that of {\cite[Prop.\ 3.6]{Bubenik_2014}} for persistence modules. Let $(\Phi,\Psi)$ be a $\delta$-interleaving between $C$ and $D$. Since $\Phi: C\to D \circ \omega_\delta$ is a natural transformation, then by functoriality so is $H\Phi:H\circ C\to H\circ D\circ \omega_\delta$ where $(H\Phi)_I=H(\Phi_I)$. Similar, we get a natural transformation $H\Psi:H\circ D\to H\circ C \circ \omega_\delta$. Using $\Psi_{\omega_\delta (I)}\circ \Phi_I = C(I\subseteq \omega_{2\delta}(I))$ and functoriality of $H$, we get
    \begin{equation*}
        H(\Psi_{\omega_\delta (I)})\circ H(\Phi_I) = H(\Psi_{\omega_\delta (I)}\circ \Phi_I) = H(C(I\subseteq \omega_{2\delta}(I))) = H\circ C(I\subseteq \omega_{2\delta}(I)),
    \end{equation*}
    and likewise $H(\Phi_{\omega_\delta (I)})\circ H(\Psi_I) = H\circ D(I\subseteq \omega_{2\delta}(I))$. So, $(H\Phi,H\Psi)$ is a $\delta$-interleaving between $H\circ C$ and $H \circ D$.
\end{proof}

For an $\mathbb{R}$-space $(X,f)$ define the \textbf{preimage functor} $\operatorname{Pre}(X,f):\mathbf{Int}\to \mathbf{Top}$ sending intervals $I$ to their preimage $f^{-1}(I)$ and inclusions $I\subseteq J$ to inclusion maps $f^{-1}(I)\hookrightarrow f^{-1}(J)$. The \textbf{Reeb precosheaf} (see \cite[Sec.\ 3.4]{de_Silva_2016}) denoted $\mathcal{D}(X,f)$ is the composition
\begin{equation*}
    \mathbf{Int}\xrightarrow{\operatorname{Pre}(X,f)}\mathbf{Top}\xrightarrow{C_0}\mathbf{Set},
\end{equation*}
so $\mathcal{D}(X,f)(I)$ is the set of connected components of the preimage $f^{-1}(I)$. 

\begin{remark}\label{remark: path-discussion}
    In \cite{de_Silva_2016,bauer_2015} they instead use path components, i.e. for $\mathbb{R}$-spaces $(X,f)$ define the \textbf{Reeb cosheaf} $\mathcal{C}(X,f)=\pi_0\circ \operatorname{Pre}(X,f):\mathbf{Int}\to\mathbf{Set}$, where $\pi_0$ sends spaces to their set of path components. The functor $\mathcal{C}(X,f)$ is a cosheaf \cite[Prop.\ 3.13]{de_Silva_2016}, while $\mathcal{D}(X,f)$ is generally not \cite[Ex.\ 3.18]{de_Silva_2016}. The two agree for \emph{constructable $\mathbb{R}$-spaces}, which are locally path-connected, so path components and connected components coincide (\cite[Thm.\ 25.5]{Munkres_2014}, \cite[Thm.\ 2.131]{Moller_notes}). In \cref{ex: Bjerkevik} we justify choosing $\mathcal{D}(X,f)$ by showing that the interleaving distance between $\mathcal{C}(\Rb(X,f))$ and $\mathcal{C}(\Rb(Y,g))$ can be both smaller and larger than the distance between $\mathcal{C}(X,f)$ and $\mathcal{C}(Y,g)$. \cref{lemma: Reeb_contracts} shows equality for connected components.
\end{remark}

\begin{theorem}
    \label{lemma: Reeb_contracts}
    Let $(X,f)$ and $(Y,g)$ be $\mathbb{R}$-spaces, then
    \begin{equation*}
        d_\mathrm{I}(\mathcal{D}(\Rb(Y,g)),\mathcal{D}(\Rb(X,f))) = d_\mathrm{I}(\mathcal{D}(Y,g),\mathcal{D}(X,f)).
    \end{equation*}
\end{theorem}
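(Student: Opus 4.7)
The plan is to prove the stronger statement that $\mathcal{D}(X,f)$ and $\mathcal{D}(\Rb(X,f))$ are naturally isomorphic as functors $\mathbf{Int}\to\mathbf{Set}$, from which the theorem follows immediately: naturally isomorphic functors interleave with the same third functor at the same distances, so having $\mathcal{D}(X,f)\cong\mathcal{D}(\Rb(X,f))$ and $\mathcal{D}(Y,g)\cong\mathcal{D}(\Rb(Y,g))$ will give the equality of the two interleaving distances (with both $\leq$ and $\geq$ at once).

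For each open interval $U\subseteq\mathbb{R}$, I would build the component map $\alpha_U:\mathcal{D}(X,f)(U)\to\mathcal{D}(\Rb(X,f))(U)$ as follows. By \cref{lemma: subspace-quotient-lemma}, $\Tilde f^{-1}(U)=\Rb(f^{-1}(U),f)$, and moreover the proof of that lemma shows $q_{f|_{f^{-1}(U)}}=(q_f)|_{f^{-1}(U)}$. So applying \cref{prop: Reeb_component} to the $\mathbb{R}$-space $(f^{-1}(U),f|_{f^{-1}(U)})$ gives a canonical bijection
\begin{equation*}
\alpha_U\;=\;C_0\!\left(q_{f|_{f^{-1}(U)}}\right)\;\colon\;C_0(f^{-1}(U))\;\xrightarrow{\;\simeq\;}\;C_0(\Rb(f^{-1}(U),f))\;=\;C_0(\Tilde f^{-1}(U)),
\end{equation*}
that is, $\alpha_U\colon \mathcal{D}(X,f)(U)\xrightarrow{\simeq}\mathcal{D}(\Rb(X,f))(U)$.

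Next I would check naturality of $\alpha$ in $U$. For an inclusion $I\subseteq J$ of intervals, the inclusion $i\colon f^{-1}(I)\hookrightarrow f^{-1}(J)$ is a morphism of $\mathbb{R}$-spaces. The naturality square in \cref{prop: Reeb_component} applied to $i$, combined with the identification from \cref{lemma: subspace-quotient-lemma}, tells us that $\Rb(i)$ is exactly the inclusion $\Tilde f^{-1}(I)\hookrightarrow\Tilde f^{-1}(J)$ (indeed, $\Rb(i)[x]_f=[i(x)]_f=[x]_f$ under the identification). Thus the naturality square of \cref{prop: Reeb_component} becomes precisely the square asserting that $\alpha$ is a natural transformation from $\mathcal{D}(X,f)$ to $\mathcal{D}(\Rb(X,f))$. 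Since every $\alpha_U$ is a bijection, $\alpha$ is a natural isomorphism, and the analogous construction works for $(Y,g)$.

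To finish, I would invoke the elementary general fact that if $C\cong C'$ and $D\cong D'$ via natural isomorphisms $\alpha,\beta$, then any $\delta$-interleaving $(\Phi,\Psi)$ between $C$ and $D$ transports to a $\delta$-interleaving between $C'$ and $D'$ by conjugating with $\alpha$ and $\beta$ (using that natural isomorphisms commute with the widening functor $\omega_\delta$), so $d_\mathrm{I}(C,D)=d_\mathrm{I}(C',D')$. Applying this with $C=\mathcal{D}(Y,g)$, $D=\mathcal{D}(X,f)$, $C'=\mathcal{D}(\Rb(Y,g))$, $D'=\mathcal{D}(\Rb(X,f))$ yields the claim. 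The only mildly delicate point — and the step I would write out most carefully — is the identification of $\Rb(i)$ with the inclusion $\Tilde f^{-1}(I)\hookrightarrow \Tilde f^{-1}(J)$, since one has to track the three a priori distinct quotient relations ($\sim_f$, $\sim_{f|_{f^{-1}(I)}}$, $\sim_{f|_{f^{-1}(J)}}$) and confirm via \eqref{eq: equivalent_same_value} and \cref{lemma: subspace-quotient-lemma} that they all agree on $f^{-1}(I)$.
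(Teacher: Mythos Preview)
Your approach is correct and uses the same ingredients as the paper (\cref{lemma: subspace-quotient-lemma} and \cref{prop: Reeb_component}), but you package them more cleanly. The paper conjugates a given $\delta$-interleaving by the bijections $C_0(q_f)$, $C_0(q_g)$ and then manually checks, for each direction separately, that the resulting compositions agree with the inclusion-induced maps --- effectively re-verifying naturality inside the interleaving argument, twice. You instead isolate the stronger statement that $C_0(q_f)$ assembles into a natural isomorphism $\mathcal{D}(X,f)\cong\mathcal{D}(\Rb(X,f))$, after which the equality of interleaving distances is a one-line consequence of the general principle that naturally isomorphic functors have equal interleaving distance to any third functor; this avoids the repetition in the paper's proof and makes transparent why both inequalities hold simultaneously. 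The identification of $\Rb(i)$ with the inclusion $\Tilde f^{-1}(I)\hookrightarrow\Tilde f^{-1}(J)$ that you flag as the delicate step is exactly what the paper handles in its explicit check that ``the composition is induced by the inclusion'', so you have correctly located where the work lies.
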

\begin{proof}
    First we show that $d_\mathrm{I}(\mathcal{D}(\Rb(Y,g)),\mathcal{D}(\Rb(X,f))) \leq d_\mathrm{I}(\mathcal{D}(Y,g),\mathcal{D}(X,f))$. 
    Let $(\Phi,\Psi)$ be a $\delta$-interleaving between $\mathcal{D}(Y,g)$ and $\mathcal{D}(X,f)$. In particular, we have maps $\Phi_{I}:\mathcal{D}(Y,g)(I)\to \mathcal{D}(X,f)(\omega_\delta(I))$ that are compatible with inclusions $I\subseteq J$ for open intervals $I$ and $J$. \cref{prop: Reeb_component} gives a natural bijection 
    \begin{equation*}
        \mathcal{D}(Y,g)(I)=C_0(g^{-1}(I))\xrightarrow[\cong]{C_0(q_g)}C_0(\Rb(g^{-1}(I),g))= \mathcal{D}(\Rb(Y,g))(I),
    \end{equation*}
    where the last equality is from \cref{lemma: subspace-quotient-lemma}. We compose natural transformations
    \begin{equation*}
         \mathcal{D}(\Rb(Y,g))\xrightarrow{C_0(q_g)^{-1}}\mathcal{D}(Y,g)\xrightarrow{\Phi} \mathcal{D}(X,f)\circ \omega_\delta \xrightarrow{C_0(q_f)}\mathcal{D}(\Rb(X,f))\circ \omega_\delta,
    \end{equation*}    
     giving a natural transformation $\phi:\mathcal{D}(\Rb(Y,g))\to\mathcal{D}(\Rb(X,f))\circ\omega_\delta$. Similarly we define $\psi=C_0(q_g)\circ\Psi\circ C_0(q_f)^{-1}:\mathcal{D}(\Rb(X,f))\to\mathcal{D}(\Rb(Y,g))\circ\omega_\delta$, and consider the composition
     \begin{equation*}
         \psi_{\omega_\delta I}\circ \phi_{I} = C_0(q_g)\circ\Psi_{\omega_\delta I}\circ C_0(q_f)^{-1}\circ  C_0(q_f)\circ\Phi_{I}\circ C_0(q_g)^{-1}.
     \end{equation*}
     Using the fact that $(\Phi,\Psi)$ is a $\delta$-interleaving, this reduces to the composition
     \begin{equation*}
        C_0(\Tilde{g}^{-1}(I))\xrightarrow{C_0(q_g)^{-1}} C_0(g^{-1}(I))\xrightarrow{C_0(\iota)}C_0(g^{-1}(\omega_\delta I))\xrightarrow{C_0(q_g)}C_0(\Tilde{g}^{-1}(\omega_\delta I)),
     \end{equation*}
     where $\iota:g^{-1}(I)\hookrightarrow g^{-1}(\omega_\delta I)$ is the inclusion. Consider $y\in g^{-1}(I)$. The component of $q_g(y)$ in $\Tilde{g}^{-1}(I)$ is sent by $C_0(\iota)\circ C_0(q_g)^{-1}$ to the component of $y$ in $g^{-1}(\omega_\delta I)$, which by $C_0(q_g)$ is sent to the component of $q_g(y)$ in $\Tilde{g}^{-1}(\omega_\delta I)$. Thus, the composition is induced by the inclusion $\Tilde{g}^{-1}(I)\hookrightarrow\Tilde{g}^{-1}(\omega_\delta I)$. The same is true for $\phi_{\omega_\delta I}\circ \psi_{I}$ by a symmetrical argument, so $(\phi,\psi)$ defines a $\delta$-interleaving.
     
     The converse follows from a similar argument. Let $(\phi,\psi)$ be a $\delta$-interleaving between $\mathcal{D}(\Rb(Y,g))$ and $\mathcal{D}(\Rb(X,f))$. In particular, we have maps $\phi_{I}:\mathcal{D}(\Rb(Y,g))(I)\to \mathcal{D}(\Rb(X,f))(\omega_\delta(I))$ that are compatible with inclusions $I\subseteq J$ for open intervals $I$ and $J$. From \cref{prop: Reeb_component} we get a composition of natural transformations
    \begin{equation*}
    \mathcal{D}(Y,g)\xrightarrow{C_0(q_g)}\mathcal{D}(\Rb(Y,g))\xrightarrow{\phi} \mathcal{D}(\Rb(X,f))\circ \omega_\delta \xrightarrow{C_0(q_f)^{-1}}\mathcal{D}(X,f)\circ \omega_\delta,
    \end{equation*}    
    defining a natural transformation $\Phi:\mathcal{D}(Y,g)\to\mathcal{D}(X,f)\circ\omega_\delta$. Similarly we define $\Psi=C_0(q_g)^{-1}\circ\psi\circ C_0(q_f):\mathcal{D}(X,f)\to\mathcal{D}(Y,g)\circ\omega_\delta$, and consider the composition
    \begin{equation*}
    \Psi_{\omega_\delta I}\circ \Phi_{I} = C_0(q_g)^{-1}\circ\psi_{\omega_\delta I}\circ C_0(q_f)\circ  C_0(q_f)^{-1}\circ\phi_{I}\circ C_0(q_g).
    \end{equation*}
    Using the fact that $(\phi,\psi)$ is a $\delta$-interleaving, this reduces to the composition
    \begin{equation*}
    C_0(g^{-1}(I))\xrightarrow{C_0(q_g)} C_0(\Tilde{g}^{-1}(I))\xrightarrow{C_0(\iota)}C_0(\Tilde{g}^{-1}(\omega_\delta I))\xrightarrow{C_0(q_g)^{-1}}C_0(g^{-1}(\omega_\delta I)),
    \end{equation*}
    where $\iota:\Tilde{g}^{-1}(I)\hookrightarrow \Tilde{g}^{-1}(\omega_\delta I)$ is the inclusion. Consider $y\in g^{-1}(I)$. It is sent to $[y]_g$ in $C_0(\Tilde{g}^{-1}(\omega_\delta I)$ by $C_0(\iota)\circ C_0(q_g)$, but this is represented by $y\in g^{-1}(\omega_\delta I)$. Thus, the composition is induced by the inclusion $g^{-1}(I)\hookrightarrow g^{-1}(\omega_\delta I)$. The same is true for $\Phi_{\omega_\delta I}\circ \Psi_{I}$ by an analogous argument.
\end{proof}

The \textbf{uniform distance} between real-valued functions $f,f':X\to\mathbb{R}$ is given by 
\begin{equation*}
    \|f-f'\|_\infty = \sup_{x\in X}|f(x)-f'(x)|.
\end{equation*}
This gives an upper bound for the interleaving distance both between the Reeb precosheaves and between the preimage functors of the functions:
\begin{lemma}[Similar to {\cite[Thm.\ 4.4(i)]{de_Silva_2016}}]\label{lemma: interleaving_bounded}
    Let $f_1,f_2:X\to\mathbb{R}$ be continuous. We have
    \begin{equation*}
        d_{\mathrm{I}}(\mathcal{D}(X,f_1),\mathcal{D}(X,f_2))\leq d_{\mathrm{I}}(\operatorname{Pre}(X,f_1),\operatorname{Pre}(X,f_2))\leq \|f_1-f_2\|_\infty.
    \end{equation*}
\end{lemma}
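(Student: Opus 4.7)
The plan is to handle the two inequalities separately. The first one, $d_{\mathrm{I}}(\mathcal{D}(X,f_1),\mathcal{D}(X,f_2))\leq d_{\mathrm{I}}(\operatorname{Pre}(X,f_1),\operatorname{Pre}(X,f_2))$, is essentially free: since $\mathcal{D}(X,f_i) = C_0 \circ \operatorname{Pre}(X,f_i)$ by definition of the Reeb precosheaf, I would apply \cref{prop:functor-shrink-interleaving} with $H = C_0$ to obtain the bound immediately.

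For the second inequality, I would set $\delta = \|f_1-f_2\|_\infty$, assuming this is finite (otherwise there is nothing to prove), and construct an explicit $\delta$-interleaving between $\operatorname{Pre}(X,f_1)$ and $\operatorname{Pre}(X,f_2)$ using inclusions. The key observation is that for any $x \in X$ and any open interval $I=(a,b)$, if $f_1(x)\in I$ then $f_2(x)\in(a-\delta,b+\delta)=\omega_\delta(I)$, so there is a set-theoretic inclusion $f_1^{-1}(I)\subseteq f_2^{-1}(\omega_\delta(I))$; likewise with the roles of $f_1,f_2$ swapped. These inclusions are continuous with respect to the subspace topologies inherited from $X$, so they define morphisms in $\mathbf{Top}$.

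I would then let $\Phi_I : \operatorname{Pre}(X,f_1)(I) \hookrightarrow \operatorname{Pre}(X,f_2)(\omega_\delta I)$ and $\Psi_I : \operatorname{Pre}(X,f_2)(I) \hookrightarrow \operatorname{Pre}(X,f_1)(\omega_\delta I)$ be these inclusion maps. Naturality of $\Phi$ and $\Psi$ with respect to interval inclusions $I \subseteq J$ is automatic because all relevant morphisms in the diagram are literal subspace inclusions in $X$ and so commute on the nose. Similarly, the interleaving identities $\Psi_{\omega_\delta I}\circ\Phi_I = \operatorname{Pre}(X,f_1)(I\subseteq\omega_{2\delta}I)$ and $\Phi_{\omega_\delta I}\circ\Psi_I = \operatorname{Pre}(X,f_2)(I\subseteq\omega_{2\delta}I)$ hold because both sides are the inclusion $f_i^{-1}(I) \hookrightarrow f_i^{-1}(\omega_{2\delta}I)$. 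This yields a $\delta$-interleaving and hence $d_{\mathrm{I}}(\operatorname{Pre}(X,f_1),\operatorname{Pre}(X,f_2)) \leq \delta$; taking infimum over admissible $\delta$ (or simply noting $\delta=\|f_1-f_2\|_\infty$ works) completes the proof.

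There is no real obstacle here; the argument reduces to routine bookkeeping once one observes that $\|f_1 - f_2\|_\infty \leq \delta$ forces $f_1^{-1}(I) \subseteq f_2^{-1}(\omega_\delta I)$. The only mild subtlety is dealing with the case $\|f_1-f_2\|_\infty = \infty$, which is handled by convention, and making sure to invoke \cref{prop:functor-shrink-interleaving} for the passage from preimage functors to Reeb precosheaves rather than re-proving it.
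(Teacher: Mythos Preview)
Your proposal is correct and mirrors the paper's proof essentially line for line: the first inequality is obtained by applying \cref{prop:functor-shrink-interleaving} with $H=C_0$, and the second by setting $\delta=\|f_1-f_2\|_\infty$, observing $f_1^{-1}(I)\subseteq f_2^{-1}(\omega_\delta I)$, and noting that all maps involved are subspace inclusions so naturality and the interleaving identities are automatic. The only addition you make is the remark about the case $\delta=\infty$, which the paper leaves implicit.
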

\begin{proof}
The first inequality follows from \cref{prop:functor-shrink-interleaving}. For the second, let $\delta=\|f_1-f_2\|_\infty$, implying that $f_1(x)-\delta\leq f_2(x)\leq f_1(x)+\delta$ for all $x$ in $X$. In particular, if $(a,b)$ is an interval and $x$ is in $f_1^{-1}(a,b)$, then $x$ is also in $f_2^{-1}(a-\delta,b+\delta)$. Thus we have inclusions
\begin{equation*}
    \operatorname{Pre}(X,f_1)(a,b)=f_1^{-1}(a,b)\hookrightarrow f_2^{-1}(a-\delta,b+\delta) = \operatorname{Pre}(X,f_2)\circ \omega_\delta(a,b)
\end{equation*}
defining a natural transformation $\operatorname{Pre}(X,f_1)\to\operatorname{Pre}(X,f_2)\circ \omega_\delta$. Together with the similarly defined $\operatorname{Pre}(X,f_2)\to\operatorname{Pre}(X,f_1)\circ \omega_\delta$, this gives a $\delta$-interleaving between $\operatorname{Pre}(X,f_1)$ and $\operatorname{Pre}(X,f_2)$ where the conditions on morphisms trivially hold as all maps are inclusions. 
\end{proof}

To ease on notation, we define the \textbf{interleaving distance} $d_\mathrm{I}$ between $\mathbb{R}$-spaces $(X,f)$ and $(Y,g)$ to be the interleaving distance between their Reeb precosheaves, i.e.\
\begin{equation*}
    d_\mathrm{I}((X,f),(Y,g)):=d_\mathrm{I}(\mathcal{D}(X,f),\mathcal{D}(Y,g)).
\end{equation*}
This simply extends the interleaving/Reeb distance of \cite[Sec.\ 2.2]{bauer_2015} and \cite[Def.\ 4.2]{de_Silva_2016} to include all $\mathbb{R}$-spaces, not only Reeb graphs. Combining \cref{lemma: Reeb_contracts,lemma: interleaving_bounded} we get the following stability, generalizing the result of {\cite[Thm.\ 4.4(ii)]{de_Silva_2016}} to non-constructable $\mathbb{R}$-spaces.

\begin{theorem}[Reeb Stability]\label{prop: Reeb_stability}
    If $f_1,f_2:X\to\mathbb{R}$ are continuous, then 
    \begin{equation*}
        d_{\mathrm{I}}(\Rb(X,f_1),\Rb(X,f_2))\leq \|f_1-f_2\|_\infty.
    \end{equation*}
\end{theorem}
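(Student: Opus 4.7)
The plan is to chain together the two preceding lemmas, \cref{lemma: Reeb_contracts} and \cref{lemma: interleaving_bounded}, via the definition of the interleaving distance on $\mathbb{R}$-spaces just introduced above the theorem statement.

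First, I would unfold the definition $d_\mathrm{I}((X,f_1'),(X,f_2')) := d_\mathrm{I}(\mathcal{D}(X,f_1'),\mathcal{D}(X,f_2'))$ applied to the Reeb graphs, so that the left-hand side of the inequality becomes
\begin{equation*}
d_\mathrm{I}(\Rb(X,f_1),\Rb(X,f_2)) = d_\mathrm{I}(\mathcal{D}(\Rb(X,f_1)),\mathcal{D}(\Rb(X,f_2))).
\end{equation*}

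Next, I would invoke \cref{lemma: Reeb_contracts} (taking $(Y,g) = (X,f_1)$ and the other pair as $(X,f_2)$) to collapse the Reeb graphs out of the expression, obtaining the equality
\begin{equation*}
d_\mathrm{I}(\mathcal{D}(\Rb(X,f_1)),\mathcal{D}(\Rb(X,f_2))) = d_\mathrm{I}(\mathcal{D}(X,f_1),\mathcal{D}(X,f_2)).
\end{equation*}
Finally, \cref{lemma: interleaving_bounded} applied directly to $f_1,f_2:X\to\mathbb{R}$ gives the bound $d_\mathrm{I}(\mathcal{D}(X,f_1),\mathcal{D}(X,f_2)) \leq \|f_1-f_2\|_\infty$, completing the chain.

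There is no substantive obstacle here: all the work has been done in the earlier lemmas. The main subtlety to flag for the reader is that the statement implicitly uses the new definition of $d_\mathrm{I}$ for arbitrary $\mathbb{R}$-spaces (via the precosheaf $\mathcal{D}$), and that the extension from the constructable case treated in \cite{de_Silva_2016} to general $\mathbb{R}$-spaces is precisely what \cref{lemma: Reeb_contracts} supplies, since for constructable $\mathbb{R}$-spaces the precosheaf $\mathcal{D}$ coincides with the cosheaf $\mathcal{C}$ used there (cf.\ \cref{remark: path-discussion}). The proof itself is then a one-line concatenation of the two lemmas.
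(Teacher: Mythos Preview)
Your proposal is correct and follows exactly the paper's approach: the paper states the theorem as an immediate combination of \cref{lemma: Reeb_contracts} and \cref{lemma: interleaving_bounded}, and your write-up spells out precisely that concatenation.
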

We also need the following result to ensure Reeb approximation:

\begin{proposition}\label{prop: isostability}
   Let $(Y,g)$ be an $\mathbb{R}$-space. If there is an $\mathbb{R}$-space isomorphism $F:(X_1,f_1)\to(X_2,f_2)$, then
    \begin{equation*}
        d_\mathrm{I}((Y,g),(X_1,f_1))=d_\mathrm{I}((Y,g),(X_2,f_2)).
    \end{equation*}
\end{proposition}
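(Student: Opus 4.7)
The plan is to show that the $\mathbb{R}$-space isomorphism $F$ induces a natural isomorphism of Reeb precosheaves $\eta:\mathcal{D}(X_1,f_1)\xrightarrow{\cong}\mathcal{D}(X_2,f_2)$, and then to transfer any $\delta$-interleaving between $\mathcal{D}(Y,g)$ and $\mathcal{D}(X_1,f_1)$ through $\eta$ to produce a $\delta$-interleaving between $\mathcal{D}(Y,g)$ and $\mathcal{D}(X_2,f_2)$ with the same $\delta$. This gives $d_\mathrm{I}((Y,g),(X_2,f_2))\leq d_\mathrm{I}((Y,g),(X_1,f_1))$, and the reverse inequality follows by applying the same argument to the inverse isomorphism $F^{-1}:(X_2,f_2)\to(X_1,f_1)$.

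First I would construct $\eta$. Since $F:X_1\to X_2$ is a homeomorphism with $f_2\circ F=f_1$, the equality $F(f_1^{-1}(I))=f_2^{-1}(I)$ holds for every open interval $I\subseteq\mathbb{R}$, so $F$ restricts to a homeomorphism $F_I:f_1^{-1}(I)\to f_2^{-1}(I)$. These restrictions commute with the inclusions $f_i^{-1}(I)\hookrightarrow f_i^{-1}(J)$ for $I\subseteq J$, which gives a natural isomorphism $\operatorname{Pre}(X_1,f_1)\cong\operatorname{Pre}(X_2,f_2)$. Post-composing with $C_0$ yields the desired natural isomorphism $\eta$ of precosheaves.

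Next I would transfer the interleaving. Given a $\delta$-interleaving $(\Phi,\Psi)$ between $\mathcal{D}(Y,g)$ and $\mathcal{D}(X_1,f_1)$, I define $\Phi'_I:=\eta_{\omega_\delta(I)}\circ \Phi_I$ and $\Psi'_I:=\Psi_I\circ \eta^{-1}_I$. Naturality of $\eta$, $\Phi$, and $\Psi$ ensure that $\Phi'$ and $\Psi'$ are natural transformations. The identity $\Psi'_{\omega_\delta I}\circ\Phi'_I=\mathcal{D}(Y,g)(I\subseteq\omega_{2\delta}(I))$ follows immediately by cancelling $\eta^{-1}_{\omega_\delta I}\circ\eta_{\omega_\delta I}$ and using the corresponding identity for $(\Phi,\Psi)$. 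The identity $\Phi'_{\omega_\delta I}\circ\Psi'_I=\mathcal{D}(X_2,f_2)(I\subseteq\omega_{2\delta}(I))$ reduces to $\eta_{\omega_{2\delta}(I)}\circ \mathcal{D}(X_1,f_1)(I\subseteq\omega_{2\delta}(I))\circ\eta^{-1}_I$, which equals $\mathcal{D}(X_2,f_2)(I\subseteq\omega_{2\delta}(I))$ by naturality of $\eta$ applied to the inclusion $I\subseteq\omega_{2\delta}(I)$.

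The argument is routine naturality/diagram chasing, and the only step that requires any care is verifying the second interleaving identity, where one must move the inclusion morphism past $\eta$ via naturality; everything else is immediate cancellation. Taking the infimum over admissible $\delta$ completes the proof.
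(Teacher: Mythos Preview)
Your proof is correct and follows essentially the same route as the paper: your natural isomorphism $\eta$ is exactly the paper's $\mathcal{D}(F)$, and your transferred pair $(\Phi',\Psi')=(\eta\circ\Phi,\Psi\circ\eta^{-1})$ is the paper's $(\mathcal{D}(F)\circ\Phi,\Psi\circ\mathcal{D}(F^{-1}))$. The only cosmetic difference is that you invoke naturality of $\eta$ to verify the second interleaving identity, whereas the paper checks by hand that $F\circ\iota\circ F^{-1}$ is the inclusion $f_2^{-1}(I)\hookrightarrow f_2^{-1}(\omega_{2\delta}(I))$; these are the same computation.
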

\begin{proof}
    If $(\Phi,\Psi)$ is a $\delta$-interleaving between $\mathcal{D}(Y,g)$ and $\mathcal{D}(X_1,f_1)$, then the pair $(\mathcal{D}(F) \circ \Phi,\Psi \circ \mathcal{D}(F^{-1}))$ is a $\delta$-interleaving between $\mathcal{D}(Y,g)$ and $\mathcal{D}(X_2,f_2)$. Indeed, by functoriality they are natural transformations, and 
    \begin{equation*}
       \Psi_{\omega_\delta I} \circ \mathcal{D}(F^{-1}) \circ \mathcal{D}(F) \circ \Phi_I = \Psi_{\omega_\delta I}  \circ \Phi_I = \mathcal{D}(Y,g)(I\subseteq \omega_\delta I).
       \end{equation*}
    For the other composition, we note that 
    \begin{equation*}
        \mathcal{D}(F) \circ \Phi_{\omega_\delta I} \circ \Psi_{I} \circ \mathcal{D}(F^{-1}) = \mathcal{D}(F) \circ \mathcal{D}(\iota) \circ \mathcal{D}(F^{-1}) = \mathcal{D}(F\circ \iota\circ F^{-1})
    \end{equation*}
    where $\iota:f_1^{-1}(I)\hookrightarrow f_1^{-1}(\omega_\delta I)$ is the inclusion.
    In particular, if $x\in f_2^{-1}(I)$, then $F(\iota(F^{-1}(x)))=x$, and so $F\circ \iota\circ F^{-1}$ is the inclusion $f_2^{-1}(I)\hookrightarrow f_2^{-1}(\omega_\delta I)$. The result now follows by a symmetrical argument interchanging $1$'s and $2$'s.
\end{proof}

We end the section with an example showing that \cref{lemma: Reeb_contracts} does not hold for the Reeb cosheaf $\mathcal{C}(X,f)$ as defined in \cref{remark: path-discussion}. In fact, the two distances are not comparable.
\begin{example}[See also {\cite[3.18 and 3.19]{de_Silva_2016}}]\label{ex: Bjerkevik} 
    Consider the (closed) topologist's sine curve
    \begin{equation*}
        S = \left\{\left(x,\sin\frac{1}{x}\right)\,\middle|\, 0 < x\leq 1\right\}\cup \left\{\left(0,y\right)\,\middle|\, -1\leq y\leq 1\right\},
    \end{equation*}
    and the projection $p:S\to\mathbb{R}$ sending $(x,y)$ to $x$. The Reeb graph $(\Rb(S,p),\Tilde{p})$ is isomorphic to $([0,1],\iota)$ where $\iota:[0,1]\hookrightarrow\mathbb{R}$ is the inclusion \cite[3.18]{de_Silva_2016}. Furthermore, consider the disjoint union $T=\{0\}\sqcup (0,1]$ together with the map continuous $g:T\to \mathbb{R}$ where $g(t)=t$. Note that $t\sim_g t'$ in $T$ if and only if $t=t'$, so $(T,g)$ and $(\Rb(T,g),\Tilde{g})$ are isomorphic.

    There is a natural isomorphism $\Theta:\mathcal{C}(S,p)\to\mathcal{C}(T,g)$ where for intervals $I$ the map $\Theta_I:\pi_0(p^{-1}(I))\to \pi_0(g^{-1}(I))$ sends the path component $\{0\}\times [-1,1]$ to $\{0\}$ if $0\in I$ and sends $\{(x,\sin{1/x})\,|\, x\in (0,1]\cap I\}$ to the path component $(0,1]\cap I$ if $(0,1]\cap I\neq\emptyset$. In particular, $d_\mathrm{I}(\mathcal{C}(S,p),\mathcal{C}(T,g))=0$. 
    
    Using functoriality of $\mathcal{C}$ \cite[Sec.\ 3.4]{de_Silva_2016} we get that $d_\mathrm{I}(\mathcal{C}(\Rb(S,p)),\mathcal{C}(\Rb(T,g)))$ is equal to $d_\mathrm{I}(\mathcal{C}([0,1],\iota),\mathcal{C}(T,g))$ as their first arguments are isomorphic. Let $I$ be an interval. If $0\in I$, then $0\in\omega_\delta(I)$ for all $\delta\geq 0$. For $\mathcal{C}([0,1],\iota)$ and $\mathcal{C}(T,g))$ to be a $\delta$-interleaved, the map $\pi_0(g^{-1}(I))\to \pi_0(g^{-1}(\omega_{2\delta}(I)))$ induced by inclusion must factor through $\pi_0(\iota^{-1}(\omega_\delta(I)))$. However, both $g^{-1}(I)$ and $g^{-1}(\omega_{2\delta}I)$ have two path components that are mapped one-to-one by the inclusion, while $\iota^{-1}(\omega_\delta(I))$ has only one path component, making it impossible. Thus,  
    \begin{equation*}
        0 = d_\mathrm{I}(\mathcal{C}(S,p),\mathcal{C}(T,g)) < d_\mathrm{I}(\mathcal{C}(\Rb(S,p)),\mathcal{C}(\Rb(T,g))) = \infty.
    \end{equation*}
    Conversely, we also note that
    \begin{equation*}
        \infty = d_\mathrm{I}(\mathcal{C}(S,p),\mathcal{C}([0,1],\iota)) > d_\mathrm{I}(\mathcal{C}(\Rb(S,p)),\mathcal{C}(\Rb([0,1],\iota))) = 0.
    \end{equation*}
\end{example}

\section{Reeb Approximation Framework} \label{sec: reeb_approximation_framework}
Here we give the main result that allow us to approximate Reeb graphs using known reconstruction results. We define path deformation retractions, which preserve Reeb graphs in some sense. All reconstruction results we consider are path deformation retractions.

For a subspace $X\subseteq Y$, a \textbf{deformation retraction} of $X$ in $Y$ is a continuous map $H:Y\times [0,1]\to Y$ such that $H(y,0)=y$ and $H(y,1)\in X$ for all $y\in Y$ and $H(x,1)=x$ for all $x\in X$. In this case, $X$ is a \textbf{deformation retract} of $Y$. We write $h_t:Y\to Y$ for the continuous map $y\mapsto H(y,t)$ given $0\leq t\leq 1$. The map $H$ is \textbf{level-set preserving} with respect to a function $g:Y\to\mathbb{R}$ if $g(h_t(y)) = g(y)$ for all $(y,t)$ in $Y\times [0,1]$.

\begin{lemma} \label{lemma: path_lemma}
    If $H:Y\times [0,1]\to Y$ is a deformation retraction that is level-set preserving with respect to $g:Y\to\mathbb{R}$, then for $y\in Y$ there exists a path in $g^{-1}(g(y))$ from $y$ to $h_1(y)$.
\end{lemma}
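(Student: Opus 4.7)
The plan is to construct the required path as a direct restriction of the deformation retraction itself, since the level-set preserving condition gives exactly the constraint needed to keep the path inside the preimage $g^{-1}(g(y))$.

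More concretely, fix $y \in Y$ and define $\gamma : [0,1] \to Y$ by $\gamma(t) = H(y,t) = h_t(y)$. First I would observe that $\gamma$ is continuous: it factors as $t \mapsto (y,t) \mapsto H(y,t)$, where the first map is continuous because it is the product of the constant map at $y$ and the identity on $[0,1]$, and the second is continuous by the definition of a deformation retraction. Then the endpoint conditions are immediate from the definition of $H$: $\gamma(0) = H(y,0) = y$ and $\gamma(1) = H(y,1) = h_1(y)$.

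It remains to check that the image of $\gamma$ lies in the level set $g^{-1}(g(y))$. For every $t \in [0,1]$, the level-set preserving hypothesis gives $g(h_t(y)) = g(y)$, so $g(\gamma(t)) = g(y)$, i.e.\ $\gamma(t) \in g^{-1}(g(y))$. Therefore $\gamma$ is a continuous path in $g^{-1}(g(y))$ from $y$ to $h_1(y)$, as required.

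There is essentially no obstacle here; the statement is almost a restatement of the level-set preserving condition, and the only thing to verify is that the obvious candidate $\gamma(t) = H(y,t)$ has all the desired properties. The lemma is presumably stated separately because it is used later (most likely in proving the main template \cref{thm: reconstruct-ideal}) to convert a level-set preserving deformation retraction into a statement about connected components of level sets, which is what the Reeb equivalence relation $\sim_g$ sees.
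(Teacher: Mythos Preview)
Your proof is correct and follows essentially the same approach as the paper: define $\gamma(t)=h_t(y)$, check the endpoints, and use the level-set preserving condition to see that the image of $\gamma$ lies in $g^{-1}(g(y))$. The only difference is that you spell out the continuity of $\gamma$ a bit more explicitly.
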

\begin{proof}
    Let $y$ be any point in $Y$. Consider the path $\gamma_y:[0,1]\to Y$ defined by $\gamma_y(t)=h_t(y)$. This is a path from $h_0(y)=y$ to $h_1(y)$ and $g(\gamma_y(t))=g(h_t(y))=g(y)$ for all $0\leq t\leq 1$. Thus, the image $\textrm{Im} (\gamma_y)$ is a subset of $g^{-1}(g(y))$.
\end{proof}

A deformation retraction $H:Y\times [0,1]\to Y$ of $X$ in $Y$ is called a \textbf{path deformation retraction} if $h_1\circ h_t = h_1$ for all $0\leq t \leq 1$. In this case $X$ is a \textbf{path deformation retract} of $Y$. If $f:X\to \mathbb{R}$ is a continuous map, we get $f \circ h_1\circ h_t = f \circ h_1$, which is exactly the definition that $H$ is level-set preserving with respect to $f\circ h_1$:

\begin{lemma} \label{lemma: path-level-set}
    Let $f:X\to\mathbb{R}$ be a continuous function. A path deformation retraction $H:Y\times [0,1]\to Y$ of $X$ in $Y$ is level-set preserving with respect to \(f\circ h_1\). \qed
\end{lemma}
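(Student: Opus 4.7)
The plan is essentially to unwind the definitions: being level-set preserving with respect to $f \circ h_1$ means that $(f \circ h_1)(h_t(y)) = (f \circ h_1)(y)$ for every $y \in Y$ and every $t \in [0,1]$, and being a path deformation retraction gives us exactly the identity $h_1 \circ h_t = h_1$ that we need.

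More concretely, I would fix $y \in Y$ and $t \in [0,1]$, and then compute
\begin{equation*}
(f \circ h_1)(h_t(y)) = f\bigl(h_1(h_t(y))\bigr) = f\bigl((h_1 \circ h_t)(y)\bigr) = f(h_1(y)) = (f \circ h_1)(y),
\end{equation*}
where the third equality is precisely the defining property $h_1 \circ h_t = h_1$ of a path deformation retraction. Since this holds for all $y$ and $t$, the map $H$ is level-set preserving with respect to $f \circ h_1$ by definition.

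There is no real obstacle here; the lemma is a direct consequence of composing the identity $h_1 \circ h_t = h_1$ with $f$ on the left, as already indicated in the text immediately preceding the statement. The content of the lemma is essentially a named restatement of this compositional fact, packaged for later use in conjunction with \cref{lemma: path_lemma} to extract paths inside level sets of $f \circ h_1$.
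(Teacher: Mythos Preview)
Your proof is correct and matches the paper's approach exactly: the paper also derives the lemma by composing the defining identity $h_1\circ h_t = h_1$ with $f$ on the left, which is why the statement is marked with a \qed and the justification appears in the sentence immediately preceding it.
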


Consider a deformation retraction $H$ of $X$ in $Y$ together with a map $f:X\to \mathbb{R}$. Since $h_1(y)$ is in $X$ for all $y$ in $Y$, we can consider the function $h_1:Y\to X$ and the inclusion $\iota:X\hookrightarrow Y$. These maps induce morphisms of $\mathbb{R}$-spaces $h_1:(Y,f\circ h_1)\to (X,f)$ and $\iota:(X,f)\to(Y,f\circ h_1)$ where $h_1\circ\iota = \Id_X$. Reeb functoriality gives the commuting diagram:
\begin{equation*}
    \begin{tikzcd}
        (Y,f\circ h_1) \arrow[r,"h_1"]\arrow[d] &
        (X,f) \arrow[r,"\iota"]\arrow[d]  &
        (Y,f\circ h_1)\arrow[d]\arrow[r,"h_1"] &  
        (X,f) \arrow[d]
        \\
        \Rb(Y,f\circ h_1)\arrow[r,"\Rb(h_1)"] &
        \Rb(X,f)\arrow [r,"\Rb(\iota)"] &
        \Rb(Y,f\circ h_1) \arrow[r,"\Rb(h_1)"] &
        \Rb(X,f).
    \end{tikzcd}
\end{equation*}

\begin{proposition}\label{prop: Reeb-isomorphism}
    Let $f:X\to\mathbb{R}$ be a continuous function. If $H:Y\times [0,1]\to Y$ is a path deformation retraction of $X$ in $Y$, then $\Rb(h_1): \Rb(Y,f\circ h_1)\to \Rb(X,f)$ is an $\mathbb{R}$-space isomorphism with $\Rb(\iota)$ as its inverse.
\end{proposition}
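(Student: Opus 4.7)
The plan is to verify the two compositions $\Rb(h_1)\circ\Rb(\iota)$ and $\Rb(\iota)\circ\Rb(h_1)$ separately, using Reeb functoriality and the path deformation retraction condition.

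First I would handle the easy direction: since $h_1\circ\iota=\Id_X$ holds on underlying spaces (and it is automatically a morphism of $\mathbb{R}$-spaces $(X,f)\to (X,f)$), functoriality of $\Rb$ immediately gives $\Rb(h_1)\circ\Rb(\iota)=\Rb(h_1\circ\iota)=\Rb(\Id_X)=\Id_{\Rb(X,f)}$. So this composition is the identity on the nose.

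The main work is the other composition $\Rb(\iota)\circ\Rb(h_1):\Rb(Y,f\circ h_1)\to\Rb(Y,f\circ h_1)$. Unwinding the definitions, this map sends $[y]_{f\circ h_1}\mapsto [h_1(y)]_{f\circ h_1}$, so it suffices to show that $y\sim_{f\circ h_1} h_1(y)$ for every $y\in Y$. This is where the path deformation retract hypothesis becomes essential: by \cref{lemma: path-level-set}, $H$ is level-set preserving with respect to $f\circ h_1$, and then \cref{lemma: path_lemma} yields a path inside $(f\circ h_1)^{-1}\!\bigl((f\circ h_1)(y)\bigr)$ from $y$ to $h_1(y)$. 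A path is connected, so $y$ and $h_1(y)$ lie in the same connected component of the level set, which is exactly the statement $y\sim_{f\circ h_1} h_1(y)$. Hence $\Rb(\iota)\circ\Rb(h_1)=\Id_{\Rb(Y,f\circ h_1)}$.

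The step I expect to be the conceptual heart is the second composition, since this is where the path deformation retraction property $h_1\circ h_t=h_1$ is really used: without it, $H$ would not be level-set preserving with respect to $f\circ h_1$, and the homotopy $t\mapsto h_t(y)$ would not stay inside a single level set. Combining both compositions, $\Rb(h_1)$ and $\Rb(\iota)$ are mutually inverse morphisms of $\mathbb{R}$-spaces, and we are done.
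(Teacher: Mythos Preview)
Your proposal is correct and follows essentially the same approach as the paper: both directions are handled exactly as you describe, with functoriality giving $\Rb(h_1)\circ\Rb(\iota)=\Id$ immediately, and \cref{lemma: path-level-set} together with \cref{lemma: path_lemma} establishing $y\sim_{f\circ h_1} h_1(y)$ for the other composition.
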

\begin{proof}
    Since $h_1\circ \iota$ is the identity, by Reeb functoriality so is $\Rb(h_1)\circ \Rb(\iota)$. We show that $\Rb(\iota)\circ \Rb(h_1):\Rb(Y,f\circ h_1)\to \Rb(Y,f\circ h_1)$ mapping $[y]$ to $[h_1(y)]$ is the identity, by showing $[h_1(y)]=[y]$ for all $y\in Y$. By \cref{lemma: path-level-set}, $H$ is level-set preserving with respect to $f\circ h_1$. By \cref{lemma: path_lemma}, the points $y$ and $h_1(y)$ are in the same path component of $(f\circ h_1)^{-1}(f(h_1(y)))$. In particular, $y\sim_{f\circ h_1} h_1(y)$, and so $[y]=[h_1(y)]$. 
\end{proof}

The following is the main theorem that allow us to transfer geometric reconstruction results to the setting of Reeb graphs:

\begin{theorem}[Approximation Template]\label{thm: reconstruct-ideal}
    Let $g:Y\to\mathbb{R}$ be a continuous map. If $H:Y\times [0,1]\to Y$ is a path deformation retraction of $X$ in $Y$, then 
    \begin{equation*}
        d_\mathrm{I}(\Rb(Y,g),\Rb(X,g|_X)) \leq \|g-g \circ h_1\|_\infty.
    \end{equation*}
\end{theorem}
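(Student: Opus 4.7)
The plan is to chain together the three key results established earlier in this section and the previous one: the Reeb-isomorphism result (\cref{prop: Reeb-isomorphism}), the isometry under $\mathbb{R}$-space isomorphism (\cref{prop: isostability}), and the Reeb stability theorem (\cref{prop: Reeb_stability}). No genuinely new construction is needed; the work is in recognizing that $g\circ h_1$ serves as a ``bridge'' between $g$ and $g|_X$.

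First, I would apply \cref{prop: Reeb-isomorphism} with the continuous function $f := g|_X : X \to \mathbb{R}$. Since $h_1(y) \in X$ for every $y \in Y$, the composition $f \circ h_1 = g|_X \circ h_1$ agrees as a function $Y \to \mathbb{R}$ with $g \circ h_1$. Thus \cref{prop: Reeb-isomorphism} delivers an $\mathbb{R}$-space isomorphism
\begin{equation*}
    \Rb(h_1) : \Rb(Y, g\circ h_1) \xrightarrow{\ \cong\ } \Rb(X, g|_X).
\end{equation*}

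Next, I would invoke \cref{prop: isostability} with $(Y,g)$ as the fixed $\mathbb{R}$-space and the isomorphism above to replace $\Rb(X, g|_X)$ by $\Rb(Y, g\circ h_1)$ without changing the interleaving distance:
\begin{equation*}
    d_\mathrm{I}\bigl(\Rb(Y,g),\Rb(X,g|_X)\bigr) = d_\mathrm{I}\bigl(\Rb(Y,g),\Rb(Y,g\circ h_1)\bigr).
\end{equation*}
Finally, since both Reeb graphs on the right-hand side come from continuous maps out of the same underlying space $Y$, \cref{prop: Reeb_stability} applies directly and yields
\begin{equation*}
    d_\mathrm{I}\bigl(\Rb(Y,g),\Rb(Y,g\circ h_1)\bigr) \leq \|g - g\circ h_1\|_\infty,
\end{equation*}
which combines with the previous equality to give the desired bound.

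There is no genuinely hard step: the theorem is essentially a bookkeeping consequence of the framework built up in \cref{sec: reebfunctor,sec: interleaving} together with \cref{prop: Reeb-isomorphism}. The only subtlety worth flagging explicitly in the write-up is the identification $g|_X \circ h_1 = g \circ h_1$ as maps $Y \to \mathbb{R}$, which is what lets the Reeb-isomorphism result (which is stated for a function defined on $X$) feed into the stability estimate (which is applied to two functions defined on $Y$).
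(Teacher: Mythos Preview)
Your proposal is correct and follows exactly the same approach as the paper's proof: apply \cref{prop: Reeb-isomorphism} with $f=g|_X$, then \cref{prop: isostability}, then \cref{prop: Reeb_stability}. Your explicit remark that $g|_X\circ h_1 = g\circ h_1$ as maps $Y\to\mathbb{R}$ is in fact slightly clearer than the paper, which silently switches between the two notations.
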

\begin{proof}
    By \cref{prop: Reeb-isomorphism}, the map $h_1$ induces an $\mathbb{R}$-space isomorphism on Reeb graphs $\Rb(h_1):\Rb(Y,g|_X\circ h_1)\to\Rb(X,g|_X)$. \cref{prop: isostability} states that isomorphisms does not change the interleaving distance. Thus we get the equality $d_\mathrm{I}(\Rb(Y,g),\Rb(X,g|_X))=d_\mathrm{I}(\Rb(Y,g),\Rb(Y,g\circ h_1))$ which is less than or equal to  $\| g - g\circ h_1\|_\infty$ by \cref{prop: Reeb_stability}.
\end{proof}

\begin{remark}\label{remark: approximation_for_other_distances}
    Note that \cref{thm: reconstruct-ideal} holds for any distance that satisfy the properties in \cref{prop: Reeb_stability} and \cref{prop: isostability}, and there are several distances where these properties hold for special cases of restricted $\mathbb{R}$-spaces. A survey on this topic can be found in \cite{bollen2022}, where they consider distances with \emph{Reeb stability} like in \cref{prop: Reeb_stability} \cite[Sec.\ 8.1]{bollen2022} and distances that are \emph{isomorphism indiscernible} \cite[Sec.\ 8.2]{bollen2022}, namely where Reeb graphs are isomorphic if and only if their distance is zero. This is stronger than our needed property of \cref{prop: isostability}.
\end{remark}

When reconstructing a space from a point cloud, we typically work in a metric space. For $k\geq 0$, we say that a function $g:M\to\mathbb{R}$ from a metric space $(M,d_M)$ is \textbf{$k$-Lipschitz} if $|g(p)-g(q)|\leq k \cdot d_M(p,q)$ for all $p,q\in M$.

\begin{corollary}\label{cor: ideal_lipschitz}
    Let $(M,d_M)$ be a metric space and let $H:M\times [0,1]\to M$ be a path deformation retraction of $X$ in $M$. If $\varepsilon,k>0$ such that $d(p,h_1(p)) < \varepsilon$ for all $p\in M$ and $g:M\to\mathbb{R}$ is $k$-Lipschitz, then 
    \begin{equation*}
        d_\mathrm{I}(\Rb(M,g),\Rb(X,g|_X)) < k\cdot\varepsilon.
    \end{equation*}
\end{corollary}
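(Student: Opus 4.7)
The plan is to apply \cref{thm: reconstruct-ideal} with $Y = M$, then control the resulting sup-norm via the Lipschitz hypothesis. All hypotheses of the theorem are immediate: $X$ is a subspace of $M$, $H$ is a path deformation retraction of $X$ in $M$, and $g$ is continuous since it is Lipschitz. Thus
\begin{equation*}
    d_\mathrm{I}(\Rb(M,g),\Rb(X,g|_X)) \leq \|g - g \circ h_1\|_\infty,
\end{equation*}
and it remains to estimate the sup-norm.

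To bound $\|g - g\circ h_1\|_\infty$, I would argue pointwise. For every $p \in M$, the $k$-Lipschitz property of $g$ combined with the hypothesis $d_M(p, h_1(p)) < \varepsilon$ gives
\begin{equation*}
    |g(p) - g(h_1(p))| \leq k \cdot d_M(p, h_1(p)) < k\varepsilon.
\end{equation*}
Taking the supremum over $p \in M$ yields $\|g - g \circ h_1\|_\infty \leq k\varepsilon$, and combining with the bound from \cref{thm: reconstruct-ideal} gives $d_\mathrm{I}(\Rb(M,g),\Rb(X,g|_X)) \leq k\varepsilon$.

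The one subtlety---and the step I would flag as the main (minor) obstacle---is sharpening this to the strict inequality stated in the corollary, since a supremum of values each strictly less than $k\varepsilon$ may still equal $k\varepsilon$. I expect this to be resolved by introducing $\varepsilon' := \sup_{p \in M} d_M(p, h_1(p)) \leq \varepsilon$: if $\varepsilon' < \varepsilon$, repeating the estimate with $\varepsilon'$ in place of $\varepsilon$ gives $\|g - g \circ h_1\|_\infty \leq k\varepsilon' < k\varepsilon$, and hence $d_\mathrm{I} < k\varepsilon$ as required. The residual case $\varepsilon' = \varepsilon$ in which the supremum is approached but not attained can be ruled out, or absorbed into the infimum defining $d_\mathrm{I}$, under mild additional assumptions such as compactness of $M$.
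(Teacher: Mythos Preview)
Your approach is essentially identical to the paper's: apply \cref{thm: reconstruct-ideal} and then bound $\|g-g\circ h_1\|_\infty$ using the Lipschitz condition. The paper's proof is the one-liner
\[
\|g - g\circ h_1\|_\infty = \sup_{p\in M} |g(p)-g(h_1(p))| \leq k \cdot \sup_{p\in M} d_M(p,h_1(p)) < k\varepsilon,
\]
and the subtlety you flag is real: the paper's final strict inequality $\sup_{p} d_M(p,h_1(p)) < \varepsilon$ is asserted without justification, and as you observe, a pointwise strict bound does not in general force the supremum to be strictly below $\varepsilon$. So you have not missed anything the paper supplies; you are simply being more careful than the paper is at this step. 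Your proposed workaround via $\varepsilon' = \sup_p d_M(p,h_1(p))$ handles the case $\varepsilon' < \varepsilon$ cleanly, but does not actually resolve the residual case $\varepsilon' = \varepsilon$ without an extra hypothesis like compactness, and the paper offers no help there either. In every application later in the paper the spaces involved are such that the issue does not bite, so this is a cosmetic gap in the statement rather than a substantive one in the mathematics.
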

\begin{proof}
    From \cref{thm: reconstruct-ideal}, we get that $d_\mathrm{I}(\Rb(M,g),\Rb(X,g|_X)) \leq \|g - g\circ h_1\|_\infty$. Now,
    \begin{equation*}
        \|g - g\circ h_1\|_\infty = \sup_{p\in M} |g(p)-g(h_1(p))| \leq k \cdot \sup_{p\in M} d_M(p,h_1(p)) < k\cdot\varepsilon,
    \end{equation*}
    which concludes the proof.
\end{proof}

\section{Reeb Approximation from Reconstruction} \label{sec: reconstruction}
We now consider results in geometric reconstruction that recover the homotopy type of an unknown space from a point cloud sample. By applying the theory developed in \cref{sec: reeb_approximation_framework} we get results to approximate the Reeb graph of the space using only the sample.

\subsection{Closed Euclidean Subsets}
Here we consider closed Euclidean subsets \cite[Sec.\ 4]{attali2024tight}, starting with some general background by Federer \cite{federer1959}. A version for closed subsets of a Riemannian manifolds is shown in \cref{sec: riemann}.

Let $X\subseteq (M,d_M)$ be a metric subspace. For $r\geq 0$, the \textbf{$r$-thickening} $X^r$ of $X$ (in $M$ with respect to $d_M$) is the union of closed balls $B_r(x)$ of radius $r$ centered at points in $x\in X$,
\begin{equation*}
    X^r = \{p\in M\,|\, \exists x\in X \textrm{ with } d_M(x,p) \leq r\}.
\end{equation*}
We define the set $\operatorname{Unp}(X)$ of points in $M$ with unique nearest point in $X$ \cite[Def.\ 4.1]{federer1959}, so
\begin{equation*}
    \operatorname{Unp}(X)=\{p\in M\,|\, \exists x\in X \textrm{ s.t. }d_M(x,p) < d_M(x',p)\textrm{ for all } x'\in X\setminus\{x\}\}.
\end{equation*}
For a Euclidean subset $X\subseteq\mathbb{R}^d$ the \textbf{reach} $\tau(X)$ of $X$ is the biggest radius $r$ such that all points in $X^r$ have a unique nearest point in $X$ \cite[Def.\ 4.1]{federer1959}, namely
\begin{equation*}
    \tau(X) = \sup\{r\,|\, X^r\subseteq \operatorname{Unp}(X)\}.
\end{equation*}

\begin{theorem}[{\cite[Thm.\ 3]{attali2024tight}}]\label{thm: attali_thm3}
    Let $X\subseteq\mathbb{R}^d$ be a closed Euclidean subset with positive reach. Furthermore, let $0\leq \delta<\beta<\tau(X)$, $0<\alpha$ and $0\leq \varepsilon\leq\sqrt{(\beta-\delta)^2-(\beta-\alpha)^2}$. If $A\subseteq X^\delta$ and $X^\alpha\subseteq A^\varepsilon$, then $X$ is a strong deformation retract of $A^\varepsilon$ along the nearest point projection.
\end{theorem}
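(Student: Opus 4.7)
The plan is to use the nearest-point projection onto $X$ and the straight-line homotopy toward it; the reach hypothesis makes the projection well-defined and continuous, and the Pythagorean-looking bound on $\varepsilon$ is precisely what is needed to keep this homotopy inside $A^\varepsilon$.

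First, since $\beta<\tau(X)$, Federer's structure theorem on the reach gives a well-defined continuous nearest-point projection $\pi_X\colon X^\beta\to X$ with $\pi_X(x)=x$ for all $x\in X$. I would next verify that $A^\varepsilon\subseteq X^\beta$, so that $\pi_X$ is defined throughout $A^\varepsilon$: if $p\in A^\varepsilon$ and $a\in A$ realises $\|p-a\|\le\varepsilon$, then $a\in X^\delta$ yields some $x\in X$ with $\|a-x\|\le\delta$, so $\|p-x\|\le\delta+\varepsilon\le\beta$, where the last step uses $\varepsilon\le\sqrt{(\beta-\delta)^2-(\beta-\alpha)^2}\le\beta-\delta$.

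Now define the straight-line homotopy $H(p,t)=(1-t)p+t\,\pi_X(p)$ on $A^\varepsilon\times[0,1]$. It is continuous, satisfies $H(p,0)=p$, $H(p,1)=\pi_X(p)\in X$, and $H(x,t)=x$ for every $x\in X$. Moreover, all points on the segment from $p$ to $\pi_X(p)$ share the same nearest point in $X$, so $h_1\circ h_t=h_1$ and $H$ will in fact be a path deformation retraction in the paper's sense once it lands in $A^\varepsilon$. Thus the only non-trivial requirement is the containment $H(p,t)\in A^\varepsilon$ for all $(p,t)$.

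The main obstacle is exactly this containment, and it is where the hypothesis $\varepsilon^{2}+(\beta-\alpha)^{2}\le(\beta-\delta)^{2}$ enters as a right-triangle inequality. I would split the parameter range. For $t$ with $(1-t)\|p-\pi_X(p)\|\le\alpha$, the point $q:=H(p,t)$ satisfies $\|q-\pi_X(p)\|\le\alpha$, so $q\in X^\alpha\subseteq A^\varepsilon$ by hypothesis. For smaller $t$, fix $a\in A$ with $\|p-a\|\le\varepsilon$ and its nearest point $x_a\in X$, write
\[
H(p,t)-a \;=\; (p-a)+t\bigl(\pi_X(p)-p\bigr),
\]
and estimate this length using the near-orthogonality between the outward normal direction $\pi_X(p)-p$ and vectors pointing toward nearby points of $X$ that Federer's reach bound provides. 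The chosen value of $\varepsilon$ is exactly what makes the resulting Pythagorean estimate yield $\|H(p,t)-a\|\le\varepsilon$ on the remaining subinterval. Putting the two regimes together gives $H(p,t)\in A^\varepsilon$ for all $t\in[0,1]$, completing the strong (and in fact path) deformation retraction.
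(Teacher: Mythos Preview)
The paper does not actually prove this statement: it is quoted verbatim from \cite{attali2024tight} and used as a black box. The only thing the present paper adds after the statement is the short verification that the straight-line homotopy toward the nearest point satisfies $h_1\circ h_t=h_1$, i.e.\ is a \emph{path} deformation retraction---which you also observe. So there is no ``paper's own proof'' to compare against here; what you have written is an attempt to reprove the cited result.

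On the substance of your sketch: the overall architecture (define $\pi_X$ on $X^\beta$ via the reach, check $A^\varepsilon\subseteq X^\beta$, take the straight-line homotopy, and split into the regime $\|H(p,t)-\pi_X(p)\|\le\alpha$ where $X^\alpha\subseteq A^\varepsilon$ applies, versus the remaining regime) is exactly right, and your first case is complete. The gap is that your second case is asserted rather than proved. Saying ``use near-orthogonality from Federer'' and ``the Pythagorean estimate works out'' is not enough: you need the concrete geometric input. The clean way to do it is to introduce the center $c=\pi_X(p)+\beta\,n$ (with $n=(p-\pi_X(p))/\|p-\pi_X(p)\|$), observe that the open ball $B(c,\beta)$ is disjoint from $X$ because $\beta<\tau(X)$, and hence that your chosen $a\in A\subseteq X^\delta$ satisfies $\|a-c\|\ge\beta-\delta$. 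With $p$, $q=H(p,t)$ and $c$ collinear, a direct computation using $\|p-a\|\le\varepsilon$, $\|a-c\|\ge\beta-\delta$, and the hypothesis $\varepsilon^2+(\beta-\alpha)^2\le(\beta-\delta)^2$ then gives $\|q-a\|\le\varepsilon$ whenever $\|q-\pi_X(p)\|>\alpha$. Without this exclusion-ball step your Case~2 is not justified, and in particular the vague ``near-orthogonality'' appeal does not by itself pin down the inequality you need.
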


To be precise, the deformation retraction is the straight line homotopy $H:A^\varepsilon\times [0,1]\to A^\varepsilon$ sending $(u,t)$ to $(1-t)u + t\pi(u)$ where $\pi:A^\varepsilon\to X$ sends a point to its unique nearest point in $X$. In particular, for $u\in A^\varepsilon$, the point $h_t(u)=H(u,t)$ is on the straight line between $u$ and $\pi(u)$, so $d_\textrm{E}(u,\pi(u)) = d_\textrm{E}(u,h_t(u))+d_\textrm{E}(h_t(u),\pi(u))$, where $d_\textrm{E}$ denotes the Euclidean metric. Using the fact that $\pi(h_t(u))$ is the closest point to $h_t(u)$ in $X$, we get 
\begin{align*}
    d_\textrm{E}(u,\pi(h_t(u)))&\leq d_\textrm{E}(u,h_t(u)) + d_\textrm{E}(h_t(u),\pi(h_t(u)))\\ &\leq d_\textrm{E}(u,h_t(u)) + d_\textrm{E}(h_t(u),\pi(u)) \\
    &= d_\textrm{E}(u,\pi(u)).
\end{align*}
Since $\pi(u)$ is the unique nearest point to $u$, this means that $\pi(h_t(u))=\pi(u)$. Thus, $h_1\circ h_t = h_1$ for all $0\leq t\leq 1$, and $H$ is a path deformation retraction. Combining \cref{thm: reconstruct-ideal,thm: attali_thm3}, we get a Reeb graph approximation result.

\begin{corollary}\label{cor: closed_set_approximation1}
    Let $X\subseteq\mathbb{R}^d$ be a closed Euclidean subset with positive reach. Furthermore, let $0\leq \delta<\beta<\tau(X)$, $0<\alpha,k$ and $0\leq \varepsilon\leq\sqrt{(\beta-\delta)^2-(\beta-\alpha)^2}$. If $A\subseteq X^\delta$, $X^\alpha\subseteq A^\varepsilon$ and $g:A^\varepsilon\to\mathbb{R}$ is $k$-Lipschitz, then
         $d_\mathrm{I}(\Rb(A^\varepsilon,g),\Rb(X,g|_X)) < k(\varepsilon+\delta)$.
\end{corollary}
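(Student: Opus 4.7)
The plan is to assemble the corollary directly from \cref{thm: attali_thm3}, \cref{thm: reconstruct-ideal}, and the Lipschitz hypothesis on $g$. The preceding discussion already verifies that the straight-line homotopy $H:A^\varepsilon\times[0,1]\to A^\varepsilon$, defined by $H(u,t)=(1-t)u+t\pi(u)$ where $\pi:A^\varepsilon\to X$ is the nearest-point projection, is a \emph{path} deformation retraction of $X$ in $A^\varepsilon$. So the hypotheses of \cref{thm: reconstruct-ideal} are met with $Y=A^\varepsilon$ and $h_1=\pi$, yielding
\begin{equation*}
    d_\mathrm{I}(\Rb(A^\varepsilon,g),\Rb(X,g|_X))\ \leq\ \|g-g\circ h_1\|_\infty.
\end{equation*}

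Next, I would translate the uniform norm into a geometric bound via the Lipschitz property. For every $p\in A^\varepsilon$ we have $|g(p)-g(h_1(p))|\leq k\cdot d_\textrm{E}(p,\pi(p))$, so it suffices to show $d_\textrm{E}(p,\pi(p))\leq \varepsilon+\delta$ for all $p\in A^\varepsilon$. This is where the two containment hypotheses enter: since $p\in A^\varepsilon$, there exists $a\in A$ with $d_\textrm{E}(p,a)\leq \varepsilon$; since $A\subseteq X^\delta$, there exists $x\in X$ with $d_\textrm{E}(a,x)\leq \delta$. The triangle inequality gives $d_\textrm{E}(p,x)\leq \varepsilon+\delta$, and minimality of $\pi(p)$ over $X$ forces $d_\textrm{E}(p,\pi(p))\leq d_\textrm{E}(p,x)\leq \varepsilon+\delta$.

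Combining these ingredients,
\begin{equation*}
    d_\mathrm{I}(\Rb(A^\varepsilon,g),\Rb(X,g|_X))\ \leq\ k\cdot\sup_{p\in A^\varepsilon} d_\textrm{E}(p,\pi(p))\ \leq\ k(\varepsilon+\delta),
\end{equation*}
which is the desired conclusion (the strict $<$ can be recovered by applying \cref{cor: ideal_lipschitz} with any $\varepsilon'>\varepsilon+\delta$ and then taking the infimum, or equivalently by noting that the bound on $\|g-g\circ h_1\|_\infty$ is achieved by a non-strict supremum that is then loosened when passing to the interleaving distance).

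I do not anticipate any real obstacle: the heavy lifting is done by \cref{thm: attali_thm3}, which both produces the retraction and, via the nearest-point-projection argument preceding the corollary, certifies that it is of the \emph{path} type required by \cref{thm: reconstruct-ideal}. The only subtle point is the bookkeeping between the two radii $\varepsilon$ (how far $A$ needs to be thickened to cover $X$) and $\delta$ (how far $A$ can stray from $X$), which combine additively by the two-step triangle inequality above.
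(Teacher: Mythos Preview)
Your proposal is correct and follows essentially the same route as the paper: invoke \cref{thm: attali_thm3} together with the preceding verification that the straight-line homotopy is a path deformation retraction, bound $d_\mathrm{E}(p,\pi(p))$ by $\varepsilon+\delta$ via the two-step triangle inequality through a point of $A$, and conclude via the approximation template. The only cosmetic difference is that the paper packages the Lipschitz step by citing \cref{cor: ideal_lipschitz} directly, whereas you unfold that corollary inline; your caveat about strict versus non-strict inequality is apt, and the paper's own proof is equally casual on this point.
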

\begin{proof}
    By \cref{thm: attali_thm3}, we have that $X$ is a deformation retract of $A^\varepsilon$ by the straight line homotopy, which we just showed is a path deformation retraction. By definition, a point $u\in A^\varepsilon$ lies within distance $\varepsilon$ of some point in $A\subseteq X^\delta$, and thus within distance $(\varepsilon+\delta)$ of some point in $X$. Now, since $h_1(u)=\pi(u)$ is the nearest point in $X$ to $u$, we conclude that $d_\textrm{E}(u,h_1(u))<(\varepsilon+\delta)$. The result follows from \cref{cor: ideal_lipschitz}. 
\end{proof}
To get reconstruction of spaces in \cref{thm: attali_thm3}, we need the space to be well-behaved (closed Euclidean subspace with positive reach), and the sample must be dense, well-distributed ($X^\alpha\subseteq A^\varepsilon$) and not too noisy ($A\subseteq X^\delta$). For approximating Reeb graphs (\cref{cor: closed_set_approximation1}), we need an additional condition on the function ($k$-Lipschitz) ensuring that it is controlled.

Following a similar argument as in \cref{cor: closed_set_approximation1}, we can also transfer the following reconstruction result to a Reeb approximation result:
\begin{theorem}[{\cite[Prop.\ 5]{attali2024tight}}]\label{thm: attali_prop5}
    Let $X\subseteq\mathbb{R}^d$ be a closed Euclidean subset with positive reach and let $0<\beta<\tau(X)$. If $A\subseteq X^\delta$ and $X\subseteq A^\varepsilon$ for some $\varepsilon,\delta<\beta$ with $\varepsilon+\sqrt{2}\delta\leq (\sqrt{2}-1)\beta$, then $X$ is a strong deformation retract of $A^r$ along the nearest point projection for all $r \in \frac{1}{2} \left(\beta +\varepsilon \pm \sqrt{2(\beta-\delta)^2-(\beta+\varepsilon)^2}\right)$. 
\end{theorem}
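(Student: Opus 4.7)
The plan is to imitate the reach-based deformation-retraction argument just used for \cref{thm: attali_thm3}: take the straight-line homotopy $H(u,t) = (1-t)u + t\pi(u)$ along the nearest-point projection $\pi\colon A^r \to X$, and verify that its trajectories stay inside $A^r$. Once this is established, the boundary conditions $H(u,0)=u$ and $H(x,1)=x$ for $x \in X$ are immediate, continuity of $H$ follows from the Lipschitz continuity of $\pi$ on $X^\beta$, and the path-retract property $h_1 \circ h_t = h_1$ holds because points on the normal segment $[u,\pi(u)]$ all project to the same footpoint $\pi(u)$.

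The first step is to check that $\pi$ is well-defined on $A^r$. Since $A \subseteq X^\delta$ one has $A^r \subseteq X^{r+\delta}$, so it suffices to verify that the constraints on $r$ force $r + \delta < \tau(X)$. This is where the hypothesis $\varepsilon + \sqrt{2}\delta \leq (\sqrt{2}-1)\beta$ enters: it guarantees that both admissible values of $r$ lie below $\beta - \delta$, placing $A^r$ inside $\operatorname{Unp}(X)$ where $\pi$ exists, is unique, and is continuous.

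The main obstacle is the containment $h_t(u) \in A^r$ for every $u \in A^r$ and every $t \in [0,1]$. Because $[u,\pi(u)]$ lies on the normal ray to $X$ at $\pi(u)$, the nearest point of $X$ to $h_t(u)$ is still $\pi(u)$, so $d(h_t(u), X) = (1-t) d(u,\pi(u))$. Combining the covering $X \subseteq A^\varepsilon$ with a sample $a^\ast \in A$ within $\varepsilon$ of $\pi(u)$ gives $d(h_t(u), a^\ast) \leq (1-t)(r+\delta) + \varepsilon$, which suffices for $t$ bounded away from $0$. For small $t$ one must exploit a sample $a$ near $u$ together with the $\delta$-noise bound $A \subseteq X^\delta$; writing the offset $u - \pi(u)$ and $a - a^\ast$ in normal/tangential components and applying Pythagoras yields a bound of the form $d(h_t(u), A) \le$ a quadratic expression in $t$ and $d(u,\pi(u)) \le r + \delta$. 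Requiring this bound to be $\leq r$ uniformly in $t$ produces the quadratic inequality $(2r - \beta - \varepsilon)^2 \leq 2(\beta - \delta)^2 - (\beta+\varepsilon)^2$ in $r$; its solutions are precisely the interval $r \in \tfrac{1}{2}\!\left(\beta + \varepsilon \pm \sqrt{2(\beta-\delta)^2 - (\beta+\varepsilon)^2}\right)$ appearing in the theorem, and the hypothesis $\varepsilon + \sqrt{2}\delta \leq (\sqrt{2}-1)\beta$ is exactly what forces the discriminant to be non-negative.

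The hardest step will be this Pythagorean geometric estimate: the naive triangle-inequality bounds $t\,d(u,\pi(u)) + r$ and $(1-t) d(u,\pi(u)) + \varepsilon$ are not individually sufficient, and one needs to use orthogonality of the sample offset to the normal direction (a standard but delicate reach argument) to avoid losing a constant factor. Once the quadratic bound is in place, the remainder of the proof is a routine verification of the deformation-retract axioms.
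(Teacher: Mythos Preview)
The paper does not contain a proof of this statement. \cref{thm: attali_prop5} is quoted verbatim as \cite[Prop.\ 5]{attali2024tight} and used as a black box; the only work the paper does around it is the short paragraph after \cref{thm: attali_thm3} verifying that the straight-line homotopy along the nearest-point projection is a \emph{path} deformation retraction (so that \cref{thm: reconstruct-ideal} applies), and that argument is reused implicitly to obtain \cref{cor: closed_set_approximation2}. There is therefore nothing in the paper to compare your proposal against.

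What you have written is a sketch of how one might reprove the cited result from \cite{attali2024tight}. As a sketch it is reasonable in outline---reduce to showing the normal segments $[u,\pi(u)]$ stay in $A^r$, then derive a quadratic constraint on $r$---but it is not a proof, and the crucial estimate is left as an assertion. In particular, your appeal to ``orthogonality of the sample offset to the normal direction'' is not literally true (a noisy sample $a\in X^\delta$ need not lie in any tangent or normal space of $X$), and the actual argument in \cite{attali2024tight} uses a more careful reach-based bound on the angle between $u-\pi(u)$ and $a-\pi(a)$ rather than a clean Pythagoras step. If your goal was only to match what the present paper does, you have overshot: all that is needed here is to cite the result and observe, as the paper does after \cref{thm: attali_thm3}, that the retraction is along nearest-point projection and hence a path deformation retraction.
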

\begin{corollary}\label{cor: closed_set_approximation2}
    Let $X\subseteq\mathbb{R}^d$ be a closed Euclidean subset with positive reach and let $0<\beta<\tau(X)$. If $A\subseteq X^\delta$ and $X\subseteq A^\varepsilon$ for some $\varepsilon,\delta<\beta$ with $\varepsilon+\sqrt{2}\delta\leq (\sqrt{2}-1)\beta$, then
    \begin{equation*}
         d_\mathrm{I}(\Rb(A^r,g),\Rb(X,g|_X)) < k(r+\delta),
    \end{equation*}    
    whenever $r \in \frac{1}{2} \left(\beta +\varepsilon \pm \sqrt{2(\beta-\delta)^2-(\beta+\varepsilon)^2}\right)$ and $g:A^r\to\mathbb{R}$ is $k$-Lipschitz. \qed
\end{corollary}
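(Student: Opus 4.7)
The plan is to mirror the proof of \cref{cor: closed_set_approximation1} almost verbatim, substituting \cref{thm: attali_prop5} for \cref{thm: attali_thm3} to obtain the deformation retract, and then tracking the correct radius $r$ in place of $\varepsilon$ when bounding $\|g - g\circ h_1\|_\infty$. Concretely, I would first invoke \cref{thm: attali_prop5} under the hypotheses $A \subseteq X^\delta$, $X \subseteq A^\varepsilon$, $\varepsilon+\sqrt{2}\delta \leq (\sqrt{2}-1)\beta$, and $r$ in the stated range; this produces a strong deformation retraction of $A^r$ onto $X$ along the nearest point projection $\pi$, given by the straight line homotopy $H(u,t) = (1-t)u + t\pi(u)$.

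Next I would observe that $H$ is a path deformation retraction. This is exactly the same argument as in the discussion preceding \cref{cor: closed_set_approximation1}: for any $u \in A^r$ and $t \in [0,1]$, the point $h_t(u)$ lies on the segment from $u$ to $\pi(u)$, so that $d_\mathrm{E}(u,\pi(u)) = d_\mathrm{E}(u,h_t(u)) + d_\mathrm{E}(h_t(u),\pi(u))$. Because $\pi(h_t(u))$ is the (unique) nearest point in $X$ to $h_t(u)$, the triangle inequality yields $d_\mathrm{E}(u,\pi(h_t(u))) \leq d_\mathrm{E}(u,\pi(u))$, which combined with uniqueness forces $\pi(h_t(u)) = \pi(u)$, i.e.\ $h_1 \circ h_t = h_1$.

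Then I would bound $d_\mathrm{E}(u,h_1(u)) = d_\mathrm{E}(u,\pi(u))$ for arbitrary $u \in A^r$. Since $u \in A^r$, there exists $a \in A$ with $d_\mathrm{E}(u,a) \leq r$, and since $A \subseteq X^\delta$, there exists $x \in X$ with $d_\mathrm{E}(a,x) \leq \delta$; thus $d_\mathrm{E}(u,x) \leq r + \delta$, and because $\pi(u)$ is the nearest point in $X$ to $u$, we get $d_\mathrm{E}(u,h_1(u)) \leq r+\delta$. Strictness (matching the strict inequality in the statement) can be handled by the same slight slack argument used implicitly in \cref{cor: closed_set_approximation1}, or by noting that the $k$-Lipschitz bound gives a non-strict estimate $k(r+\delta)$ which is upgraded to strict exactly as in \cref{cor: ideal_lipschitz}.

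Finally, applying \cref{cor: ideal_lipschitz} to the metric space $M = A^r$ with the path deformation retraction $H$ and the $k$-Lipschitz function $g$ gives $d_\mathrm{I}(\Rb(A^r,g),\Rb(X,g|_X)) < k(r+\delta)$. The only genuinely nontrivial step is the verification that $H$ is a path deformation retraction, but this is identical to the argument already carried out for \cref{cor: closed_set_approximation1} and depends only on the projection being well-defined on $A^r$, which \cref{thm: attali_prop5} guarantees; hence the "$\square$" in the statement, and the proof essentially reduces to bookkeeping.
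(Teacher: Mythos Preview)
Your proposal is correct and follows exactly the approach the paper intends: the paper states the corollary with a \qed\ and only remarks beforehand that one follows ``a similar argument as in \cref{cor: closed_set_approximation1}'' using \cref{thm: attali_prop5} in place of \cref{thm: attali_thm3}, which is precisely what you do. Your treatment is in fact more explicit than the paper's, including your careful remark about the strictness of the final inequality.
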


In \cite{wang2018} they consider compact (and thus closed and bounded), differential manifolds $\mathcal{M}\subseteq\mathbb{R}^d$ possibly with boundary $\partial\mathcal{M}$. The same argument holds in this case.
\begin{theorem}[Thm.\ 3.2 in {\cite{wang2018}}]
    Let $\mathcal{M}\subseteq\mathbb{R}^d$ be a compact, differentiable manifold, and $A\subseteq\mathcal{M}$ finite. If $\mathcal{M}\subseteq A^{\varepsilon/2}$ for $\varepsilon<\beta/2<1/2\cdot \min\{\tau_{\mathcal{M}},\tau_{\partial\mathcal{M}}\}$ where the tangent space projection $\varphi_{p,\mathcal{M}}:\mathcal{M}\to T_p\mathcal{M}$ restricted to $B_\beta(p)\cap\mathcal{M}$ is a diffeomorphism onto its image for all $p\in\mathcal{M}$, then $\mathcal{M}$ is a (strong) deformation retract of $A^\varepsilon$ along the nearest point projection.
\end{theorem}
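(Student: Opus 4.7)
The plan is to take $\pi\colon A^\varepsilon \to \mathcal{M}$ to be the nearest point projection and to use the straight-line homotopy $H(u,t) = (1-t)u + t\pi(u)$, exactly mirroring the argument given just after \cref{thm: attali_thm3}. Three things must be established: (i) $\pi$ is well-defined and continuous on $A^\varepsilon$; (ii) the straight segment from $u$ to $\pi(u)$ stays inside $A^\varepsilon$ for every $u$; (iii) the resulting map $H$ is a strong deformation retraction onto $\mathcal{M}$ (and, in fact, a path deformation retraction in the sense of \cref{sec: reeb_approximation_framework}, so that \cref{thm: reconstruct-ideal} applies).

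For (i), any $u \in A^\varepsilon$ lies within Euclidean distance $\varepsilon$ of some $a \in A \subseteq \mathcal{M}$, so $d(u,\mathcal{M}) \le \varepsilon < \beta/2$. The conditions $\beta < \min\{\tau_{\mathcal{M}}, \tau_{\partial \mathcal{M}}\}$, together with the hypothesis that $\varphi_{p,\mathcal{M}}$ is a diffeomorphism on $B_\beta(p)\cap\mathcal{M}$ for every $p \in \mathcal{M}$, confine the candidate nearest points to a single graphical patch over a tangent space. The reach assumption on $\mathcal{M}$ rules out multiple interior nearest points, while the reach assumption on $\partial\mathcal{M}$ handles boundary projections; combining these yields uniqueness of $\pi$, and standard Federer-style arguments give continuity.

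The technical heart is (ii). I would exploit the density $\mathcal{M} \subseteq A^{\varepsilon/2}$: pick $a \in A$ with $\|u-a\| \le \varepsilon$ and, since $\pi(u) \in \mathcal{M}$, pick $a' \in A$ with $\|\pi(u)-a'\| \le \varepsilon/2$. Because $a \in \mathcal{M}$, we have $\|u-\pi(u)\| \le \|u-a\| \le \varepsilon$. The crude triangle inequality immediately gives $\|H(u,t)-a'\| \le (1-t)\varepsilon + \varepsilon/2 \le \varepsilon$ for $t \ge 1/2$, covering the latter half of the segment. The initial half is the delicate piece: using the tangent-space graphicality on $B_\beta$, one can bound the distance from $u$ to its tangential foot on $\mathcal{M}$ and then invoke the half-radius density near that foot to produce a closer witness in $A$, or alternatively insert intermediate sample points along the short curve in $\mathcal{M}$ between $a$ and $\pi(u)$ given by the graphical chart.

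Once (ii) is in hand, (iii) follows mechanically: $H(u,0)=u$ and $H(u,1)=\pi(u)\in\mathcal{M}$, while $\pi(x)=x$ for $x\in\mathcal{M}$ makes $H(x,t)=x$, so $H$ is a strong deformation retraction; continuity of $H$ follows from continuity of $\pi$. The same uniqueness-of-nearest-point argument used just after \cref{thm: attali_thm3} yields $\pi \circ h_t = \pi$, hence $h_1 \circ h_t = h_1$, so $H$ is a path deformation retraction. The main obstacle is step (ii): ensuring that the straight-line trajectory never escapes $A^\varepsilon$ is precisely what forces the half-radius density $\mathcal{M} \subseteq A^{\varepsilon/2}$ rather than an $\varepsilon$-cover, and is where the tangent-space diffeomorphism hypothesis does genuine work by controlling the curvature of $\mathcal{M}$ along the segment.
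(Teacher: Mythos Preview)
The paper does not prove this statement. It is quoted verbatim as Theorem~3.2 of \cite{wang2018} and used as a black box; the only thing the paper adds (in the sentence ``The same argument holds in this case'' preceding the theorem) is the observation, spelled out after \cref{thm: attali_thm3}, that the straight-line homotopy along the nearest point projection satisfies $h_1\circ h_t=h_1$ and is therefore a path deformation retraction. Your item~(iii) reproduces exactly this observation and is correct.

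Your items~(i) and~(ii), by contrast, attempt to reprove the content of \cite{wang2018} itself, and there is nothing in the present paper to compare them to. As sketches they are plausible in outline but not complete: in~(i) the phrase ``standard Federer-style arguments'' hides the genuine issue that $\mathcal{M}$ may have boundary, so Federer's unique-projection theory for sets of positive reach does not apply directly and one really does need the tangent-space diffeomorphism hypothesis and the separate reach bound on $\partial\mathcal{M}$; in~(ii) your covering of the segment for $t\geq 1/2$ is fine, but the ``delicate piece'' for $t<1/2$ is left as a vague appeal to graphicality and intermediate sample points, which is precisely the nontrivial geometric lemma that \cite{wang2018} has to establish. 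If your goal is only to match what the present paper does, you should simply cite \cite{wang2018} for the existence of the retraction and keep only your argument for $h_1\circ h_t=h_1$.
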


\begin{corollary}\label{cor: closed_set_approximation3}
    Let $\mathcal{M}\subseteq\mathbb{R}^d$ be a compact, differentiable manifold, and $A\subseteq\mathcal{M}$ finite. If $\mathcal{M}\subseteq A^{\varepsilon/2}$ for $\varepsilon<\beta/2<1/2\cdot \min\{\tau_{\mathcal{M}},\tau_{\partial\mathcal{M}}\}$ where the tangent space projection $\varphi_{p,\mathcal{M}}:\mathcal{M}\to T_p\mathcal{M}$ restricted to $B_\beta(p)\cap\mathcal{M}$ is a diffeomorphism onto its image for all $p\in\mathcal{M}$, then 
        $d_\mathrm{I}(\Rb(A^\varepsilon,g),\Rb(X,g|_X))< k\cdot\varepsilon$,
    for any $k$-Lipschitz map $g:A^\varepsilon\to\mathbb{R}$. \qed
\end{corollary}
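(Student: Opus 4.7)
The plan is to follow the same template as in the proofs of Corollaries~\ref{cor: closed_set_approximation1} and~\ref{cor: closed_set_approximation2}: the cited reconstruction theorem from \cite{wang2018} supplies a strong deformation retraction along the nearest point projection, we verify it is a \emph{path} deformation retraction, bound $\|g - g\circ h_1\|_\infty$ via $d_{\mathrm{E}}(u,h_1(u))$, and invoke \cref{cor: ideal_lipschitz}.

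First, I invoke the quoted theorem (Thm.\ 3.2 of \cite{wang2018}) to get that $\mathcal{M}$ is a strong deformation retract of $A^\varepsilon$ along the nearest point projection $\pi:A^\varepsilon\to\mathcal{M}$. As $A^\varepsilon\subseteq\mathbb{R}^d$, this deformation retraction is the straight-line homotopy $H:A^\varepsilon\times[0,1]\to A^\varepsilon$ with $H(u,t)=(1-t)u+t\pi(u)$. Second, I verify it is a path deformation retraction. The argument already used between \cref{thm: attali_thm3} and \cref{cor: closed_set_approximation1} carries over verbatim: for each $u\in A^\varepsilon$, the point $h_t(u)$ lies on the segment between $u$ and $\pi(u)$, so $d_{\mathrm{E}}(u,\pi(u))=d_{\mathrm{E}}(u,h_t(u))+d_{\mathrm{E}}(h_t(u),\pi(u))$, and using that $\pi(h_t(u))$ is the nearest point in $\mathcal{M}$ to $h_t(u)$ one obtains $\pi(h_t(u))=\pi(u)$. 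Hence $h_1\circ h_t=h_1$ for all $t\in[0,1]$.

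Third, I bound the displacement. For any $u\in A^\varepsilon$ there exists $a\in A$ with $d_{\mathrm{E}}(u,a)<\varepsilon$; since $A\subseteq\mathcal{M}$ and $h_1(u)=\pi(u)$ is the (unique) closest point of $\mathcal{M}$ to $u$, we obtain $d_{\mathrm{E}}(u,h_1(u))\leq d_{\mathrm{E}}(u,a)<\varepsilon$. Finally, I apply \cref{cor: ideal_lipschitz} with $M=A^\varepsilon$, $X=\mathcal{M}$, and the $k$-Lipschitz function $g$, which immediately yields $d_\mathrm{I}(\Rb(A^\varepsilon,g),\Rb(\mathcal{M},g|_\mathcal{M}))<k\cdot\varepsilon$.

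There is no real obstacle here; the only nontrivial piece is confirming that the straight-line homotopy is a path deformation retraction, but this is a direct reuse of the Euclidean argument, since the nearest point projection to a set with positive reach is well-defined on the entire thickening in question. Both the density condition $\mathcal{M}\subseteq A^{\varepsilon/2}$ and the tangent-space diffeomorphism hypothesis are consumed entirely by the cited reconstruction theorem and play no further role in the Reeb-side estimate.
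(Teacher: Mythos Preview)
Your proposal is correct and follows exactly the route the paper intends: the paper gives no explicit proof here (just \qed) but states ``the same argument holds in this case,'' referring to the argument for \cref{cor: closed_set_approximation1}, and you have faithfully reproduced that argument with the obvious simplification $\delta=0$ coming from $A\subseteq\mathcal{M}$. The only minor quibble is that the thickening is defined with closed balls, so strictly speaking $d_{\mathrm E}(u,a)\le\varepsilon$ rather than $<\varepsilon$; however, the paper's own proof of \cref{cor: closed_set_approximation1} makes the identical slip, so your argument is no less precise than the original.
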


\subsection{Reeb approximation from closed subsets of Riemannian manifolds} \label{sec: riemann}
We consider approximating the Reeb graph of closed subsets of a Riemannian manifold, following \cite[Sec.\ 5]{attali2024tight}. As before we need the unknown subset and the sample to be well-behaved, in addition we also need the underlying manifold to be manageable. 

Let $\mathcal{N}$ be a Riemannian manifold. The \textbf{sectional curvature} of $\mathcal{N}$ at a point $p$ is the (unique) quantity
\begin{equation*}
    K_\mathcal{N}(p) = \langle \nabla_Y\nabla_X X - \nabla_X\nabla_Y X,Y \rangle
\end{equation*}
where $X$ and $Y$ are orthogonal unit vectors in $T_p\mathcal{N}$ \cite[Sec.\ 6.3.3]{burago2014}. We say that $\lambda\in\mathbb{R}$ \textbf{bounds the sectional curvature of $\mathcal{N}$ from below} if $\lambda\leq K_\mathcal{N}(p)$ for all $p\in\mathcal{N}$. The \textbf{cut locus} $\operatorname{cl}(X)$ of a closed subset $X$ in $\mathcal{N}$ is the set of points in $\mathcal{N}$ with at least two minimal geodesics to point(s) in $X$ \cite[Def.\ 12]{attali2024tight}. The \textbf{cut locus reach} $\tau^{\operatorname{cl}}(X)$ of $X$ in $\mathcal{N}$ is the smallest distance between $X$ and $\operatorname{cl}(X)$ \cite[Def.\ 14]{attali2024tight}, equivalently
\begin{equation*}
    \tau^{\operatorname{cl}}(X)=\sup\{ r \,|\,X^r\cap \operatorname{cl}(X)=\emptyset\},
\end{equation*}
where the thickening $X^r$ uses the metric $d_{\mathcal{N}}$ defined by the minimal length geodesics.

\begin{theorem}[{\cite[Prop.\ 15]{attali2024tight}}]
    Let $X\subseteq \mathcal{N}$ be a closed subset of a ($C^2$) Riemannian manifold such that $\lambda$ bounds its sectional curvature from below, and assume $0<\beta\leq\tau^{\operatorname{cl}}(X)$. If $A\subseteq X^\delta$ and $X\subseteq A^\varepsilon$ for $\varepsilon,\delta <\beta$ such that
    \begin{gather}
    \begin{aligned}\label{eq: Riemannian_reconstruct_demands}
        2 \cos{\sqrt{\lambda}(\beta-\delta)}-\cos{\sqrt{\lambda}(\beta+\varepsilon)}\leq 1 &&\textrm{if }\lambda > 0,\\
        \sqrt{2}(\beta-\delta)-(\beta+\varepsilon)\leq 0  &&\textrm{if }\lambda = 0,\\
        2 \cosh{\sqrt{\lambda}(\beta-\delta)}-\cosh{\sqrt{\lambda}(\beta+\varepsilon)}\geq 1 &&\textrm{if }\lambda<0,    
    \end{aligned}
    \end{gather}
then $X$ is a deformation retract of $A^r$ along the nearest point projection for $r=1/2\cdot(\beta+\varepsilon)$.
\end{theorem}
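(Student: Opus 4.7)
The plan is to mirror the construction used in the Euclidean case of \cref{thm: attali_thm3}, replacing the straight-line homotopy by the Riemannian geodesic homotopy toward the nearest point. First I would verify that the nearest point projection $\pi: A^r \to X$ is well-defined and continuous. Because $A\sse X^\delta$ and $r=(\beta+\varepsilon)/2$, we have $A^r\sse X^{r+\delta}$. The algebraic conditions in \eqref{eq: Riemannian_reconstruct_demands} are designed so that $r+\delta\le\beta$, hence $A^r\cap\operatorname{cl}(X)=\emptyset$, which means that each $u\in A^r$ admits a unique minimizing geodesic to a unique nearest point $\pi(u)\in X$. Continuity of $\pi$ and smooth dependence of geodesics on endpoints (using the $C^2$ assumption away from the cut locus) then allow me to define $H:A^r\times[0,1]\to\mathcal{N}$ by letting $h_t(u)$ be the point at parameter $t$ along the unique minimal geodesic from $u$ to $\pi(u)$. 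By construction $h_0=\Id$, $h_1(u)=\pi(u)\in X$, and $h_1|_X=\Id_X$.

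The nontrivial task is showing that $h_t(u)\in A^r$ for all $t\in[0,1]$, so that $H$ maps into $A^r$ and is a deformation retraction. Given $u\in A^r$, pick a witness $a_u\in A$ with $d_\mathcal{N}(u,a_u)\le r$, and using $X\sse A^\varepsilon$ pick $a_x\in A$ with $d_\mathcal{N}(\pi(u),a_x)\le\varepsilon$. The candidate witness for $h_t(u)\in A^r$ is $a_x$: I would bound $d_\mathcal{N}(h_t(u),a_x)$ via the geodesic triangle with vertices $u,\pi(u),a_x$, whose two relevant sides have lengths at most $r+\delta$ (the first, since $d_\mathcal{N}(u,\pi(u))\le d_\mathcal{N}(u,a_u)+d_\mathcal{N}(a_u,\pi(a_u))\le r+\delta$) and $\varepsilon$ respectively. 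Invoking Toponogov's comparison theorem---whose hypothesis is precisely the lower curvature bound $\lambda$---I would pass to a comparison triangle in the simply-connected model space of constant curvature $\lambda$ (sphere if $\lambda>0$, plane if $\lambda=0$, hyperbolic space if $\lambda<0$) and bound $d_\mathcal{N}(h_t(u),a_x)$ above by the corresponding distance there.

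The conditions in \eqref{eq: Riemannian_reconstruct_demands} are exactly the law-of-cosines inequalities (in spherical, Euclidean, and hyperbolic geometry respectively) that translate the triangle-side bounds into $d_\mathcal{N}(h_t(u),a_x)\le r$; the three trigonometric forms $\cos\sqrt{\lambda}(\cdot)$, $\sqrt{2}(\cdot)$, and $\cosh\sqrt{|\lambda|}(\cdot)$ are exactly the curvature-weighted analogues of the Euclidean bound used in \cref{thm: attali_prop5}. One subtlety that I would address is that the worst case occurs at an endpoint $t\in\{0,1\}$ of the geodesic (by convexity/monotonicity along the comparison triangle), so it suffices to verify the bound at $t=0$ (which is given) and $t=1$ (which is $d_\mathcal{N}(\pi(u),a_x)\le\varepsilon\le r$ by construction), though the full argument needs the model-space comparison to pass from endpoint bounds to the uniform bound along the geodesic.

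The main obstacle is the Toponogov comparison step: setting up the comparison triangle carefully, handling the three sign regimes of $\lambda$ uniformly, and ensuring that the minimal geodesic from $h_t(u)$ to its nearest point in $X$ lies in the well-defined domain (which follows from the preservation of the nearest point along the initial geodesic, a property that also makes $H$ a \emph{path} deformation retraction, enabling the Reeb approximation analogue in the Riemannian setting via \cref{thm: reconstruct-ideal} and \cref{cor: ideal_lipschitz}).
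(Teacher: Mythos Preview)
The paper does not prove this theorem; it is quoted from \cite[Prop.\ 15]{attali2024tight} and used as a black box. The only argument the paper supplies is the short paragraph immediately after the statement, which records what the homotopy is (each point moves along the unique minimizing geodesic to its nearest point in $X$) and verifies the one extra property the paper needs, namely that $H$ is a \emph{path} deformation retraction ($h_1\circ h_t=h_1$), so that \cref{thm: reconstruct-ideal} applies and yields \cref{cor: Riemannian_subset_approximation}.

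Your proposal therefore attempts much more than the paper does: you are sketching a proof of the cited result itself. The broad strategy---geodesic homotopy toward the nearest point, with Toponogov comparison in the model space of curvature $\lambda$ to show the track stays in $A^r$---is indeed how such results are proved and matches the approach of \cite{attali2024tight}. One concrete gap in your sketch: the claim that \eqref{eq: Riemannian_reconstruct_demands} forces $r+\delta\le\beta$ is false (for $\lambda=0$, $\beta=1$, $\delta=\varepsilon=0.4$ the condition $\sqrt{2}(\beta-\delta)\le\beta+\varepsilon$ holds but $r+\delta=1.1>\beta$), so well-definedness of $\pi$ on $A^r$ does not follow from the crude inclusion $A^r\subseteq X^{r+\delta}$; the argument in \cite{attali2024tight} handles this more carefully. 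For the purposes of the present paper, however, none of this is needed---only the description of $H$ and the observation $\pi(h_t(u))=\pi(u)$.
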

The homotopy $H:A^r\times [0,1]\to A^r$ moves each point $p$ in $A^r$ along the unique minimal length geodesic $h_\bullet(p):[0,1]\to A^r$ towards its closest point in $X$. Since $h_\bullet$ is a unique and minimal geodesic, then so is the continuous segment $h_\bullet(p)|_{[t,1]}$ for $0\leq t\leq 1$. In particular, the point $h_t(p)$ is sent to $h_1(p)$ by $h_1$, and $H$ is a path deformation retraction.

\begin{corollary}\label{cor: Riemannian_subset_approximation}
    Let $X\subseteq \mathcal{N}$ be a closed subset of a ($C^2$) Riemannian manifold such that $\lambda$ bounds its sectional curvature from below, and assume $0<\beta\leq\tau^{\operatorname{cl}}(X)$. If $A\subseteq X^\delta$ and $X\subseteq A^\varepsilon$ for $\varepsilon,\delta <\beta$ satisfying equation \eqref{eq: Riemannian_reconstruct_demands}, then $d_{\mathrm{I}}(\Rb(A^r,g),\Rb(X,g|_X))<k(r+\delta)$ for $r=1/2\cdot(\beta+\varepsilon)$ and any $k$-Lipschitz function $g:A^r\to\mathbb{R}$. \qed
\end{corollary}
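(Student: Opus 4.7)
The plan is to follow the exact same template that yielded \cref{cor: closed_set_approximation1,cor: closed_set_approximation2,cor: closed_set_approximation3}, only adapted from the Euclidean setting to the Riemannian one. Two ingredients are already in place: the cited reconstruction result (Prop.\ 15 of \cite{attali2024tight}) gives a strong deformation retraction of $A^r$ onto $X$ along the nearest-point projection, and the paragraph immediately preceding the corollary verifies that this retraction is a \emph{path} deformation retraction, by using that the restriction of a unique minimal geodesic to any subinterval $[t,1]$ is again the unique minimal geodesic between its endpoints. Hence $h_1 \circ h_t = h_1$ for all $t \in [0,1]$, and the hypotheses of \cref{thm: reconstruct-ideal} (and therefore of \cref{cor: ideal_lipschitz}) are met.

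The second step is to bound $\sup_{p\in A^r} d_{\mathcal{N}}(p, h_1(p))$. Given $p \in A^r$, pick $a \in A$ with $d_\mathcal{N}(p,a) \leq r$, and, using $A \subseteq X^\delta$, pick $x \in X$ with $d_\mathcal{N}(a,x) \leq \delta$. The triangle inequality for $d_\mathcal{N}$ yields $d_\mathcal{N}(p,x) \leq r + \delta$, and since $h_1(p)$ is by construction a nearest point of $p$ in $X$, we get
\begin{equation*}
    d_\mathcal{N}(p, h_1(p)) \leq d_\mathcal{N}(p,x) \leq r + \delta.
\end{equation*}
Exactly as in the proofs of \cref{cor: closed_set_approximation1,cor: closed_set_approximation2}, this bound is then fed into \cref{cor: ideal_lipschitz} with $M = A^r$, underlying metric $d_\mathcal{N}$, and $\varepsilon := r+\delta$. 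Since $g$ is assumed $k$-Lipschitz on $A^r$, the conclusion is
\begin{equation*}
    d_\mathrm{I}(\Rb(A^r,g), \Rb(X, g|_X)) < k(r+\delta),
\end{equation*}
which is exactly the claim.

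I expect this to be essentially bookkeeping: the real work has been done in \cref{thm: reconstruct-ideal} and \cref{cor: ideal_lipschitz}, in the cited Riemannian reconstruction theorem, and in the preceding remark that the geodesic homotopy is path-preserving. The only mild subtlety is the strict-versus-weak inequality in \cref{cor: ideal_lipschitz}; this is resolved in the same way as in the Euclidean corollaries, namely by noting that the supremum of $d_\mathcal{N}(p, h_1(p))$ over $p \in A^r$ is bounded by $r+\delta$, and $k$-Lipschitzness then yields $\|g - g \circ h_1\|_\infty \leq k(r+\delta)$, which via \cref{prop: Reeb_stability} and \cref{prop: isostability} (combined as in \cref{thm: reconstruct-ideal}) gives the stated strict bound after the standard infimum argument for the interleaving distance. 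No new Riemannian geometry is needed beyond what the cited theorem and the preceding paragraph provide.
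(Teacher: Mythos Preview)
Your proposal is correct and follows exactly the approach the paper intends: the corollary is marked with \qed\ and no proof is given, precisely because it is the direct Riemannian analogue of \cref{cor: closed_set_approximation1,cor: closed_set_approximation2}, with the path-deformation-retraction property supplied by the paragraph preceding the statement and the distance bound $d_\mathcal{N}(p,h_1(p))\le r+\delta$ obtained by the same triangle-inequality argument. One small caveat: your appeal to a ``standard infimum argument'' to upgrade $\le$ to $<$ is not really an argument---the infimum defining $d_\mathrm{I}$ does not by itself turn a non-strict bound into a strict one---but this matches the paper's own level of rigor (the proof of \cref{cor: closed_set_approximation1} asserts the same strict inequality from closed thickenings without further justification), so you are in good company.
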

The value for $r$ can be picked in a greater interval (see the extended version of \cite[B.3.1]{attali2024tight}), and our result will still hold.

\section{Computation} \label{sec: computation}

Given a set of points $A\subseteq \R^d$ and a parameter $\varepsilon$, we describe an algorithm to compute the Reeb graph $\mathcal R :=\Rb(A^\varepsilon,f)$, where $A^\varepsilon$ is the $\varepsilon$-thickening of $A$ and $f:A^\varepsilon\to\R$ is an affine function.
Unless $f$ is constant, up to reparameterization of $\R^d$, it is a projection onto one of the $d$ coordinate axes.
\cref{alg:main} takes as input two sets $\I_A$ and $\I_T$ of closed intervals, where $\I_A$ has an interval $I_p$ for each element $p$ of a set $A$, and $\I_T$ has an interval $J = I_{p,q}\sse I_p\cap I_q$ for every element of $T$, which is a set of unordered pairs $\{p,q\}\sse A$.
To apply the algorithm to the union of $\varepsilon$-balls, we let $I_p = f(B_\varepsilon(p))$ and $I_{p,q} = f(B_\varepsilon(p)\cap B_\varepsilon(p'))$ whenever this is nonempty.
Computing the minimum and maximum of the intersection of a pair of balls in constant time results in $O(n^2)$ time to prepare the input to the algorithm by running over all points and pairs of points.
This is faster than $O(n(n+t) \alpha(n))$, which is the complexity of \cref{alg:main} by \cref{thm:algorithm_complexity}, so the running time of the whole procedure is $O(n(n+t) \alpha(n))$. We first illustrate the algorithm with an example.

\begin{example}
Let $A=\{p,q,r,s\}\sse \R^2$ as drawn in \cref{fig:algorithm}.
Let $\varepsilon=1$, and let $f\colon \R^2 \to \R$ be projection onto the $y$-axis.
We first compute the intervals $I_u$ and $I_{u,v}$ for $u\neq v$ in $A$, and sort them by increasing left endpoints. For ties we put intervals of the form $I_u$ before those of the form $I_{u,v}$.
We iterate through these intervals; the figure shows right before we start the iteration of the last interval, $I_{r,s}$.
We have a partition $\Q$ (stored as a list partition-of-reals in \cref{alg:main}) of $\R$, and for every $J\in \Q$, we have stored a partition $\Pp(J)$ of a subset of $A$ with $u$ and $v$ in the same set if we have discovered that their balls intersect in the same connected component of $f^{-1}(x)$.
We make sure that $\Q$ contains as few intervals as possible after every iteration, so if $J$ and $J'$ are consecutive intervals in $\Q$, then $\Pp(J)\neq \Pp(J')$.

When we get to the iteration of $I_{p,r}$ (line 4 in \cref{alg:main}), we run through the intervals in $\Q$ intersecting $I_{p,r}$, colored red in \cref{fig:algorithm}.
For each such $J$, we merge the sets in $\Pp(J)$ containing $r$ and $s$ (line 18, which calls \cref{alg:union}; the merging happens on line 2 in \cref{alg:union}).
In $\{prs\}$, this makes no difference, so we move on to $\{ps,r\}$, which becomes $\{prs\}$ after merging.
Since this is equal to the partition for the previous interval in $\Q$, we combine the two intervals into one to keep $\Q$ minimal (line 10 and 11 in \cref{alg:union}; the latter calls \cref{alg:delete}).
Similarly, the next partition $\{r,s\}$ becomes $\{rs\}$ after merging.
However, this interval only partly overlaps with $I_{p,q}$, so we first need to split it in two (line 5 in \cref{alg:union}, which calls \cref{alg:split}), and then do the merging operation over the interval contained in $I_{p,q}$ (line 6 in \cref{alg:union}).
This gives the partitions on the right in \cref{fig:algorithm}, which encode the Reeb graph next to it.
\end{example}
\begin{figure}[h]
\centering
\begin{tikzpicture}[scale=1.3]
\node[below] at (.1,1){$p$};
\fill (.1,1) circle (.05cm);
\fill[opacity=.3] (.1,1) circle (1cm);
\node[right] at (1.4,-.4){$q$};
\fill (1.4,-.4) circle (.05cm);
\fill[opacity=.3] (1.4,-.4) circle (1cm);
\node[right] at (1.9,1.3){$r$};
\fill (1.9,1.3) circle (.05cm);
\fill[opacity=.3] (1.9,1.3) circle (1cm);
\node[above] at (.5,2){$s$};
\fill (.5,2) circle (.05cm);
\fill[opacity=.3] (.5,2) circle (1cm);
\draw[thick] (3,-1.4) to (3,.6);
\draw[dotted] (1.4,-1.4) to (6,-1.4);
\node[below] at (3,-1.4){$I_q$};
\draw[thick] (3.3,0) to (3.3,2);
\draw[dotted] (.1,0) to (6,0);
\draw[dotted] (.1,2) to (6.24,2);
\node[below] at (3.3,0){$I_p$};
\draw[thick] (3.6,.1) to (3.6,.5);
\draw[dotted] (0.54,.1) to (6,.1);
\draw[dotted] (0.96,.5) to (6,.5);
\node[below] at (3.6,.1){$I_{p,q}$};
\draw[thick] (3.9,.3) to (3.9,2.3);
\draw[dotted] (1.9,2.3) to (6.4,2.3);
\node[below] at (3.9,.3){$I_r$};
\draw[thick] (4.2,.3) to (4.2,.6);
\draw[dotted] (1.9,.3) to (6,.3);
\draw[dotted] (1.4,.6) to (6,.6);
\node[below] at (4.2,.3){$I_{q,r}$};
\draw[thick] (4.5,.75) to (4.5,1.55);
\draw[dotted] (1.07,.75) to (6,.75);
\draw[dotted] (.93, 1.55) to (6,1.55);
\node[below] at (4.5,.75){$I_{p,r}$};
\draw[thick] (4.8,1) to (4.8,3);
\draw[dotted] (.5,3) to (6,3);
\node[below] at (4.8,1){$I_s$};
\draw[thick] (5.1,1) to (5.1,2);
\draw[dotted] (.5,1) to (6,1);
\draw[dotted] (.1,2) to (5.1,2);
\node[below] at (5.1,1){$I_{p,s}$};
\draw[thick,color=red] (5.4,1.09) to (5.4,2.21);
\draw[dotted] (.92,1.09) to (5.4,1.09);
\draw[dotted] (1.48,2.21) to (5.4,2.21);
\node[below] at (5.4,1.09){$I_{r,s}$};
\draw[very thick] (6,-1.4) to (6,.5) to[out=50,in=-50] (6,.75) to (6,3);
\draw[very thick] (6,.5) to[out=130,in=-130] (6,.75);
\draw[very thick] (5.9,0) to (6,.1);
\draw[very thick] (6,1.55) to (6.4,2.3);
\node at (7,-1.1){$\emptyset$};
\draw[dotted] (6,-1.4) to (6.5,-.925) to (7.4,-.925);
\node at (7,-.75){$\{q\}$};
\draw[dotted] (6,0) to (6.5,-.575) to (7.4,-.575);
\node at (7,-.4){$\{p,q\}$};
\draw[dotted] (6,.1) to (6.5,-.225) to (7.4,-.225);
\node at (7,-.05){$\{pq\}$};
\draw[dotted] (6,.3) to (6.5,.125) to (7.4,.125);
\node at (7,.3){$\{pqr\}$};
\draw[dotted] (6,.5) to (6.5,.475) to (7.4,.475);
\node at (7,.65){$\{p,qr\}$};
\draw[dotted] (6,.6) to (6.5,.825) to (7.4,.825);
\node at (7,1){$\{p,r\}$};
\draw[dotted] (6,.75) to (6.5,1.175) to (7.4,1.175);
\node at (7,1.35){$\{pr\}$};
\draw[dotted] (6,1) to (6.5,1.525) to (7.4,1.525);
\node at (7,1.7){\textcolor{red}{$\{prs\}$}};
\draw[dotted] (6,1.55) to (6.5,1.875) to (7.4,1.875);
\node at (7,2.05){\textcolor{red}{$\{ps,r\}$}};
\draw[dotted] (6.24,2) to (6.5,2.225) to (7.4,2.225);
\node at (7,2.4){\textcolor{red}{$\{r,s\}$}};
\draw[dotted] (6.4,2.3) to (6.5,2.575) to (7.4,2.575);
\node at (7,2.75){$\{s\}$};
\draw[dotted] (6,3) to (6.5,2.925) to (7.4,2.925);
\node at (7,3.1){$\emptyset$};
\begin{scope}[xshift=2.5cm]
\draw[very thick] (6,-1.4) to (6,.5) to[out=50,in=-50] (6,.75) to (6,3);
\draw[very thick] (6,.5) to[out=130,in=-130] (6,.75);
\draw[very thick] (5.9,0) to (6,.1);
\draw[very thick] (6,2.21) to (6.1,2.3);
\node at (7,-1.1){$\emptyset$};
\draw[dotted] (6,-1.4) to (6.5,-.925) to (7.4,-.925);
\node at (7,-.75){$\{q\}$};
\draw[dotted] (6,0) to (6.5,-.575) to (7.4,-.575);
\node at (7,-.4){$\{p,q\}$};
\draw[dotted] (6,.1) to (6.5,-.225) to (7.4,-.225);
\node at (7,-.05){$\{pq\}$};
\draw[dotted] (6,.3) to (6.5,.125) to (7.4,.125);
\node at (7,.3){$\{pqr\}$};
\draw[dotted] (6,.5) to (6.5,.475) to (7.4,.475);
\node at (7,.65){$\{p,qr\}$};
\draw[dotted] (6,.6) to (6.5,.825) to (7.4,.825);
\node at (7,1){$\{p,r\}$};
\draw[dotted] (6,.75) to (6.5,1.175) to (7.4,1.175);
\node at (7,1.35){$\{pr\}$};
\draw[dotted] (6,1) to (6.5,1.525) to (7.4,1.525);
\node at (7,1.7){\textcolor{red}{$\{prs\}$}};
\draw[dotted] (6,2) to (6.5,1.875) to (7.4,1.875);
\node at (7,2.05){\textcolor{red}{$\{sr\}$}};
\draw[dotted] (6,2.21) to (6.5,2.225) to (7.4,2.225);
\node at (7,2.4){\textcolor{red}{$\{r,s\}$}};
\draw[dotted] (6.1,2.3) to (6.5,2.575) to (7.4,2.575);
\node at (7,2.75){$\{s\}$};
\draw[dotted] (6,3) to (6.5,2.925) to (7.4,2.925);
\node at (7,3.1){$\emptyset$};
\end{scope}
\end{tikzpicture}
\caption{From left to right: the discs of radius $1$ around the points in $A$, the images of the discs and pairwise intersections, and the constructed Reeb graphs with partitions before and after handling the last interval $I_{r,s}$.
We use the shorthand $\{pq,r\}$ for $\{\{p,q\},\{r\}\}$, etc.}
\label{fig:algorithm}
\end{figure}
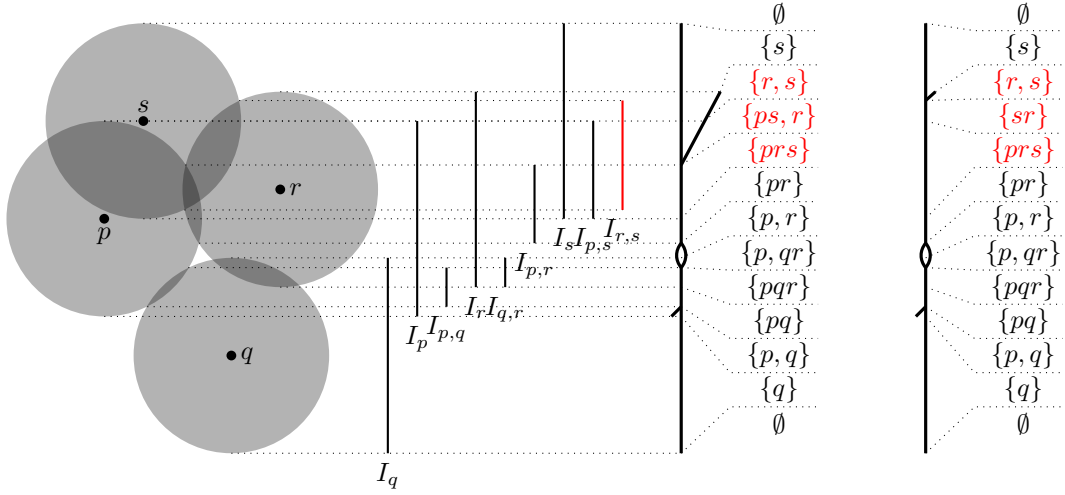

\subsection{The algorithm}

We now describe the algorithm and the data structures in more detail. The main algorithm is shown in \cref{alg:main}, and helper functions in \cref{alg:union}, \cref{alg:delete}, and \cref{alg:split}.
The input to the algorithm is two (finite) sets of closed intervals $\I_A$ and $\I_T$, where $\I_A$ has an interval $I_p$ for each element $p$ of a set $A$, and $\I_T$ has an interval $J = I_{p,q}\sse I_p\cap I_q$ for every element of $T$, which is a set of unordered pairs $\{p,q\}\sse A$.
We assume that the elements of $\I_A$ and $\I_T$ are equipped with the appropriate labels of the form $p$ and $\{p,q\}$, respectively.

\begin{definition}
\label{def:alg_Reeb_graph}
Let $\mathcal R(\I_A,\I_T)$ be the $\R$-space whose underlying space is $\bigsqcup_A \{p\}\times I_p/\sim$, where $\sim$ is generated by the relations $(p,x)\sim (q,x)$ for $x\in I_{p,q}\in \I_T$.
The function $f\colon \mathcal R(\I_A,\I_T)\to \R$ is defined by $(p,x)\mapsto x$.
For $x\in \R$, let $\sim_x$ be the equivalence relation on $\{p \in A \mid x\in I_p\}$ defined by $p\sim_x q$ if $(p,x)\sim (q,x)$, and let $\Pp^{\I_A\cup\I_T}_x$ be the set of equivalence classes of $\sim_x$.
\end{definition}

The algorithm constructs $\mathcal R(\I_{A'},\I_{T'})$ for increasing subsets $A'\sse A$ and $T'\sse T$, with one element being added to either $A'$ or $T'$ for each pass through the for loop at line 4 of \cref{alg:main}.
For two non-intersecting intervals $I$ and $J$, write $I<J$ if $x<y$ for $x\in I$ and $y\in J$.
By interval, we always mean nonempty interval.
Throughout the algorithm, we are updating a data structure built on partition-of-reals, which is a linked list $[J_1 < J_2 < \dots]$ of intervals that form a partition $\Q$ of $\R$.
When we have constructed $\mathcal R(\I_{A'},\I_{T'})$, $\Q$ is the coarsest partition of $\R$ with the property that for every $J\in \Q$ and $x,y\in J$, we have $\Pp^{\I_{A'}\cup\I_{T'}}_x = \Pp^{\I_{A'}\cup\I_{T'}}_y$.
Every entry $J$ has the following attributes:
\begin{itemize}
\item pointers $J$.pred and $J$.succ to its predecessor and successor in the list,
\item a union-find data structure $J$.UF that stores a partition of a subset of $A$.
\end{itemize}
The union-find data structure is a forest of rooted trees, each tree having one set of the partition as its vertex set, and each vertex has a pointer to its parent.
It supports the operations make-set($p$) (creates a new set $\{p\}$ in the partition), union($p,q$) (replaces the sets containing $p$ and $q$ with their union) and find-set($p$) (returns the root of the tree containing $p$) for $p,q\in A$.
For each pair of consecutive intervals $J<J'$ in partition-of-reals, we store a bijective undirected graph $G(J,J')$ on $\Pp_J\sqcup \Pp_{J'}$, where $\Pp_J$ is the partition stored by $J$.UF and similarly for $\Pp_{J'}$. The graph
$G(J,J')$ has an edge between $S\in \Pp_J$ and $T\in \Pp_{J'}$ if $S$ and $T$ intersect, which will be true if and only if $S\sse T$ or $T\sse S$.
This graph is stored as pointers in both directions between $s$ and $t$ whenever there is an edge between $S$ and $T$, where $s$ and $t$ are the roots representing the sets $S$ and $T$ in $J$.UF and $J'$.UF, respectively.
If a root has several neighbors, the pointers are stored in a linked list that is sorted according to a total order of $A$ that is the same for all the graphs.

The graphs $G(J,J')$ encode how the inverse images $f^{-1}(J)$ in $\mathcal R(\I_A,\I_T)$ are connected:
$f^{-1}(J)$ is a disjoint set of copies of $J$ such that the restriction of $f$ to each copy is the identity. 
Let $C_J$ be the set of connected components of $f^{-1}(J)$.
There is an edge in $G(J,J')$ between elements representing $x\in C_J$ and $y\in C_{J'}$ if and only if their closures intersect in $\mathcal R(\I_A,\I_T)$.
The data of $\Q$ together with these bipartite graphs describe $\mathcal R(\I_A,\I_T)$ up to isomorphism of $\R$-spaces, and these intervals and graphs are part of the output of \cref{alg:main}. 

\begin{algorithm}[H]
\caption{main$(\I_A, \I_T)$}
\label{alg:main}
\begin{algorithmic}[1]
\STATE Sort $\I_A\sqcup \I_T$ by increasing left endpoints, with intervals in $\I_A$ before those in $\I_T$ in case of ties.
Let $\I$ be the sorted list.
\STATE partition-of-reals = $[\R]$
\STATE $\R$.UF = $\emptyset$
\FOR{$I\in \I_A\sqcup \I_T$}
    \IF{$I$ is of the form $I_p$}
        \FOR{each interval $J$ in partition-of-reals intersecting $I$}
            \IF{$J\sse I$}
                \STATE $J$.UF.make-set(p)
                \STATE update $G$($J$.pred, $J$) with an edge $p \leftrightarrow p$ if $p$ is in $J$.pred.UF
            \ENDIF
            \IF{$J\nsubseteq I$}
                \STATE $L=J$.split \COMMENT{$L$ is $J\cap I$}
                \STATE $L$.UF.make-set(p)
                \STATE update $G$($L$.pred, $L$) with an edge $p \leftrightarrow p$ if $p$ is in $L$.pred.UF
            \ENDIF
        \ENDFOR
    \ENDIF
    \IF{$I$ is of the form $I_{p,q}$}
        \STATE first = TRUE
        \FOR{each interval $J$ in partition-of-reals intersecting $I$}
            \IF{$J$.UF.find-set(p) $\neq J$.UF.find-set(q)}
                \STATE union($J$,$I_{p,q}$,first)
                \STATE first = FALSE
            \ENDIF
        \ENDFOR
    \ENDIF
\ENDFOR
\end{algorithmic}
\end{algorithm}

\begin{algorithm}[H]
\caption{union($J$,$I_{p,q}$,first)}
\label{alg:union}
\begin{algorithmic}[1]
\IF{$J\sse I$}
    \STATE $J$.UF.union(p,q), which merges trees with roots $r,r'$ to a tree with root $r$
    \STATE in $G(J$.pred$,J)$ and $G(J,J$.succ$)$, add edges between $r$ and all the neighbors $r'$ had before the previous line
\ELSE
    \STATE $J$ = split$(J,I)$ \COMMENT{This changes $J$ to $J\cap I$}
    \STATE $J$.UF.union(p,q), which merges trees with roots $r,r'$ to a tree with root $r$
    \STATE in $G(J$.pred$,J)$ and $G(J,J$.succ$)$, add edges between $r$ and all the neighbors $r'$ had before the previous line
\ENDIF
\IF{first}
    \IF{$J$.UF and $J$.pred.UF encode the same partition}
        \STATE change underlying interval of $J$.pred to $J$.pred $\cup J$
        \STATE delete($J$)
    \ENDIF
\ENDIF
\end{algorithmic}
\end{algorithm}

\begin{algorithm}[H]
\caption{delete($J$)}
\label{alg:delete}
\begin{algorithmic}[1]
\STATE $J$.pred.succ = $J$.succ
\STATE $J$.succ.pred = $J$.pred
\STATE $G$($J$.pred, $J$.succ) = $G$($J$.pred, $J$) $\circ$ $G$($J$, $J$.succ)
\end{algorithmic}
\end{algorithm}

\begin{algorithm}[H]
\caption{split($J,I$)}
\label{alg:split}
\begin{algorithmic}[1]
\FORALL{$p\in J$.UF}
    \STATE $J$.UF.find-set($p$)
\ENDFOR
\STATE $J_1 = J\cap I$
\STATE $J_2 = J\setminus I$
\IF{$J_2$ is connected}
    \IF{$J_1>J_2$}
        \STATE swap $J_1$ and $J_2$
    \ENDIF
    \STATE $J_1$.UF = $J_2$.UF = $J$.UF
    \STATE $J$.pred.succ = $J_2$.pred = $J_1$
    \STATE $J$.succ.pred = $J_1$.succ = $J_2$
    \STATE $G(J_1,J_2)$ = the identity on the roots of $J$
    \STATE $G$($J_1$.pred$,J_1$) = $G$($J_1$.pred$,J$)
    \STATE $G$($J_2$,$J_2$.succ) = $G$($J_2$,$J$.succ)
\ELSE
    \STATE do the same as when $J_2$ is connected, but with three intervals $J_1<J_2<J_3$ replacing $J$ in partition-of-reals
\ENDIF
\RETURN the $J_i$ with $J_i=J\cap I$
\end{algorithmic}
\end{algorithm}

\subsection{Analysis of the algorithm}

We now prove the following lemma, which shows that the algorithm outputs what we expect it to.
For an interval $J$ in partition-of-reals, let $\Pp(J)$ be the partition stored in $J$.UF.
For $p\in A$, let $[p]_J$ be the set in $\Pp(J)$ containing $p$ if such a set exists.
Equivalently, $[p]_J$ is the set of nodes of $J$.UF that are in the same tree as $p$.
(i) shows that $\Pp(J)$ has an element for every connected component of $f^{-1}(J)$, and that these are the explicit representations of the elements of $f^{-1}(x)$ for every $x\in J$ used when defining $\mathcal R(\I_A,\I_T)$ in \cref{def:alg_Reeb_graph}.
(ii) shows that partition-of-reals is minimal, and (iii) shows that the graphs $G(J,J')$ correctly connect the connected components of $f^{-1}(J)$ for different $J$ to construct $\mathcal R(\I_A,\I_T)$.
\begin{lemma}
\label{lem:algorithm}
After the outer for loop in \cref{alg:main} has run through $\I \sse \I_A\cup \I_T$, we have the following:
\begin{itemize}
\item[(i)] For each $J$ in partition-of-reals, $\Pp(J) = \Pp^\I_x$ for every $x\in J$.
\item[(ii)] The intervals in partition-of-reals form the minimal partition of $\R$ such that the previous point holds.
\item[(iii)] In each graph $G(J,J')$ for consecutive elements $J,J'$ of partition-of-reals, there is an edge between a root $x$ in $J$.UF and a root $y$ in $J'$.UF if and only if $[x]_J$ and $[y]_{J'}$ intersect.
\item[(iv)] Let $I$ be the last interval in the list $\I$.
For any $J<J'$ in partition-of-reals with $J\nless I$, and $r\in A$ that is stored in $J'$.UF, we have $[r]_{J'} \sse [r]_J$.
\end{itemize}
\end{lemma}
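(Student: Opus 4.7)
By induction on the iteration count of the outer for loop in \cref{alg:main}, equivalently on \(|\I|\). The base case \(\I = \emptyset\) is immediate: partition-of-reals is \([\R]\) with empty \(\R.\mathrm{UF}\), so (i)--(iv) hold vacuously. For the inductive step, split by whether the next interval \(I\) lies in \(\I_A\) or \(\I_T\).

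If \(I = I_p\), the algorithm walks over every pre-processing interval \(J\) intersecting \(I\), splits \(J\) if \(J \not\sse I\), and attaches \(\{p\}\) as a fresh singleton to the \(\mathrm{UF}\) of the piece lying inside \(I\). Claim (i) is immediate since the partition change induced by adding \(I_p\) to \(\I\) is exactly this insertion. For (ii), each newly split pair differs by the presence of \(\{p\}\), and two consecutive intervals both inside \(I\) that were distinct before stay distinct after the singleton is added. Claim (iii) is the explicit edge update in \cref{alg:main}, and (iv) follows because for \(r \ne p\) nothing changes, while for \(r = p\) both \([p]_J\) and \([p]_{J'}\) equal \(\{p\}\).

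If \(I = I_{p,q}\), let \(K_1 < \dots < K_n\) be the pre-processing intervals intersecting \(I\) and \(L_i = K_i \cap I\). By (iv) at the previous iteration, once \(p,q\) lie in distinct sets of \(\Pp(K_i)\) they remain in distinct sets of \(\Pp(K_j)\) for all \(j > i\); hence the indices at which the union actually fires form a suffix \(\{i_0, \dots, n\}\), and the first-merge check runs once, at \(K_{i_0}\). Claims (i) and (iii) follow from \cref{alg:union}, \cref{alg:delete}, and \cref{alg:split} by direct inspection. For (iv) on the new \(I\), I handle the three subcases \(J, J' \sse I\); \(J, J' > I\); \(J \sse I\) and \(J' > I\), in each reducing to (iv) at the previous iteration applied to the pre-processing pair; the required hypothesis \(J \nless I_{\mathrm{prev}}\) holds because \(I_{\mathrm{prev}}\)'s left endpoint is at most that of \(I\).

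The crux is verifying (ii) for consecutive pairs \(L_{i-1}, L_i\) with \(i_0 < i \le n\), where the algorithm performs no merge check. I would extract the sub-claim: if \(\Pp_2 \prec \Pp_1\) is a strict refinement of partitions of a common set and both have \(p, q\) in distinct blocks, then their quotients by the relation \(p \sim q\) remain distinct. A witness \(r\) of strictness either lies in a \(\Pp_1\)-block disjoint from \(\{p,q\}\) (so the refinement survives the quotient), or lies in \([p]_{\Pp_1} \cup [q]_{\Pp_1}\), in which case a short case analysis shows distinctness survives unless \([p]_{\Pp_2} = [p]_{\Pp_1}\) and \([q]_{\Pp_2} = [q]_{\Pp_1}\), which together with equality of the remaining blocks contradicts strictness. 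Granted this, (iv) at the previous iteration supplies the refinement and (ii) at the previous iteration supplies strictness, so \(\Pp(L_{i-1}) \ne \Pp(L_i)\). The boundary pairs involving \(K_1 \setminus I\) or \(K_n \setminus I\), and the interface where \(L_{i_0}\) may be absorbed into its predecessor, are verified directly; differences in underlying sets of consecutive \(K_i\)'s (from some \(I_r\) starting or ending at a boundary) only reinforce distinctness.
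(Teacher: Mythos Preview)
Your induction and case split match the paper's proof. Two remarks worth recording.

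First, for (ii) in the $I_{p,q}$ case, your refinement sub-lemma (merging $p\sim q$ in both members of a strict-refinement pair preserves distinctness) makes explicit a step the paper only asserts: it jumps from ``merges form a suffix'' directly to ``equality can only occur at the first merged interval'' without saying why two consecutive intervals that \emph{both} receive the merge remain distinct. Your argument is correct and closes that gap.

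Second, your three-subcase analysis for (iv) in the $I_{p,q}$ case is not exhaustive. After processing $I$, an interval $J$ with $J\nless I$ need not satisfy $J\sse I$ or $J>I$: it can straddle the left endpoint of $I$ (for instance $K_1$ when $i_0>1$ and $K_1\not\sse I$, since then $K_1$ is never split; or the interval produced when $L_{i_0}$ is absorbed into a predecessor that extends left of $I$). The paper avoids enumeration altogether by arguing directly: (iv) holds for the new $I$ before the iteration begins (because the left endpoint of $I$ is no smaller than that of $I_{\mathrm{prev}}$), splits and deletes preserve it trivially, and if a union in $J'$ were to break it for some $r$, then $J'$---hence also $J$---intersects $I$, and the containments $[p]_{J'}^{\mathrm{old}}\sse[p]_J^{\mathrm{old}}$, $[q]_{J'}^{\mathrm{old}}\sse[q]_J^{\mathrm{old}}$ together with the union in $J$ restore $[r]_{J'}\sse[r]_J$. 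Either adopt that direct argument or add the missing case, observing that in it $p$ and $q$ already share a block of $\Pp(J)$.
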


\begin{proof}[Proof of \cref{lem:algorithm}]
We prove the lemma by induction.
If $\I = \emptyset$, the claims are easy to check, so assume that the claims are true for $\I\setminus \{I\}$.
We will show that they hold also after adding $I$.

First, suppose $I$ is of the form $I_p$ for some $p\in A$.

(i) Note that by the ordering chosen in line 1 of the main algorithm, we do not have $p \sim_x^\I q$ for any $x$.
Thus, to go from $\Pp^{\I\setminus \{I\}}_x$ to $\Pp^\I_x$ for every $x$, we need to split any $J$ that intersects $I$, but is not contained in $I$, and then add $\{p\}$ as a separate set in $\Pp(J)$ for all $J$ in partition-of-reals contained in $I_p$.
This is exactly what the algorithm does.

(ii) The splitting of intervals does not contradict minimality, since if $x\in I_p$ and $y\notin I_p$, then we must have $\Pp^\I_x\neq \Pp^\I_y$ since they are partitions of different sets.
Moreover, if $\Pp(J)$ and $\Pp(J')$ are different partitions not containing $p$, then adding $\{p\}$ to one or both does not make them equal.
Thus, since (ii) holds for the previous iteration by the inductive hypothesis, it holds also after the iteration for $I$.

(iii) We only need to add edges from $p$ to $p$ whenever they appear over consecutive intervals, and the algorithm does this.

(iv) Immediate, because the only thing we do is adding singleton sets to $\Pp(J)$ for $J$ intersecting $I$.

Next, assume $I$ is of the form $I_{p,q}$ for $p,q\in A$.

(i) To go from $\Pp^{\I\setminus \{I\}}_x$ to $\Pp^\I_x$ for every $x$, we need to merge the sets containing $p$ and $q$ in $\Pp(J)$ for every $J\sse I$, potentially after splitting some $J$.
Line 17 in \cref{alg:main} checks if we need to merge, and if it does, it calls \cref{alg:union}, which if necessary calls \cref{alg:split} to split $J$, and then calls UF.union to merge the sets of $p$ and $q$.

(ii) The only thing that can cause non-minimality is if there are consecutive $K$ and $K'$ in partition-of-reals with $\Pp(K)$ and $\Pp(K')$ becoming equal after merging the sets of $p$ and $q$ in $K'$.
Since (iv) is true after the previous iteration, once the condition on line 17 holds, it holds for the remaining $J$ that we iterate over.
Thus, $\Pp(K)$ and $\Pp(K')$ becoming equal can only happen if $K'$ is the first $J$ for which the condition on line 17 holds and $K'$.prev = $K$.
In this case, the variable `first' (line 15) is true, which triggers a check of whether $\Pp(K)=\Pp(K')$ on line 9 in \cref{alg:union}.
If they do, then $K$ and $K'$ are merged, preserving minimality.

(iii) To preserve this property, whenever we merge two sets with roots $r,r'$ in $J$.UF, the neighbors of the root representing the merged set must be the union of the neighbors of $r$ and $r'$ before the merging.
This is taken care of in lines 3 and 7 in \cref{alg:union}.
We also make the necessary changes to the graphs when we split or delete an interval.
Note that we only call \cref{alg:delete} when $\Pp(J)=\Pp(J$.prev$)$, and they have not been changed since before the iteration of $I$ (except potentially splitting the intervals, which does not make any relevant changes to $\Pp(J)$, $\Pp(J$.prev$)$ and $G$($J$.pred, $J$)).
Therefore, $G$($J$.pred, $J$) is a bijection between roots storing equal sets by (iii) from the previous iteration, and it follows that line 3 in \cref{alg:delete} computes the graph $G$($J$.pred, $J$.succ) satisfying (iii).

(iv) By the inductive assumption, (iv) holds for the previous interval $I'$ before we start the iteration of $I$.
Because of the way $\I$ is sorted, the endpoint of $I$ is not to the left of the left endpoint of $I'$, so (iv) holds also for $I$ before the iteration of $I$.
The changes we may make during the iteration are splits and merges of intervals, and taking unions of sets in partitions $\Pp(J)$.
The splits and merges do not affect (iv), since we always split or merge identical partitions.
The unions we take are always of two sets where one contains $p$ and the other $q$, and we take this union for all $J$.UF where $J$ intersects $I$.

For (iv) to fail, we need to take the union of two sets $S$ and $[r]_{J'}$ in $\Pp(J')$ such that $S\nsubseteq [r]_J$.
But then $J'$ has to intersect $I$, so $J$ intersects $I$, and we must have $S=[p]_{J'}$ and $[r]_{J'} = [q]_{J'}$ (or the same with $p$ and $q$ swapped, which is similar).
This means that $S\sse [p]_J$ and $[r]_{J'}\sse [q]_J$ by (iv) from the previous step, so after running $J$.UF.union($p,q$), we get $S\cup [q]_{J'} \sse [p]_J\cup [q]_J = [q]_J =[r]_J$, so (iv) holds.
\end{proof}

We now analyze the computational complexity of the algorithm.
Let $n = |A|$ and $t = |T|$, so $n$ is the number of interval building blocks, and $t$ is the number of gluing intervals.
Let $\alpha(n)$ be the inverse Ackermann function, which is slow-growing to the point that it is constant for all practical purposes.

\begin{theorem}
\label{thm:algorithm_complexity}
\cref{alg:main} runs in $O(n(n+t) \alpha(n))$.
\end{theorem}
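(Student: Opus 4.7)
The plan is to decompose the running time into sorting, linked-list bookkeeping, union-find operations, and bipartite-graph maintenance, then charge the standard amortized cost $O(\alpha(n))$ per union-find operation and sum. Sorting the inputs costs $O((n+t)\log(n+t))$, which is dominated by $n(n+t)\alpha(n)$ since $t \le n^2$.

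First I would show that partition-of-reals has size $O(n+t)$ throughout the execution. Each input in $\I_A \sqcup \I_T$ triggers at most two calls to \cref{alg:split} (one near each endpoint) and at most one call to \cref{alg:delete} (guarded by the \textit{first} flag in \cref{alg:union}). Starting from $[\R]$, the list length therefore stays bounded by $1+2(n+t)$.

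Next I would count the primitive operations. Make-sets occur at most once per pair $(p,J)$ with $p\in J$.UF; since $|J.\text{UF}|\le n$ and partition-of-reals has $O(n+t)$ entries, the total is $O(n(n+t))$. Unions are bounded by total make-sets, so also $O(n(n+t))$. Each of the $O(n+t)$ splits iterates through $J$.UF in lines 1--3 of \cref{alg:split} in $O(n\alpha(n))$ time, summing to $O(n(n+t)\alpha(n))$. Each of the $O(n+t)$ deletes composes two bipartite graphs of size $O(n)$ in $O(n)$ time, contributing $O(n(n+t))$. Graph updates during a union merge the neighbor lists of the two absorbed roots; each edge is moved only once into the surviving list, so the total graph work attributable to unions is $O(n(n+t))$.

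The main obstacle will be bounding the total number of find-set operations, which is $O\!\left(\sum_I N(I)\right)$, where $N(I)$ denotes the number of intervals in partition-of-reals intersecting $I$ at the moment $I$ is processed. For $I = I_p \in \I_A$, every iteration performs a make-set of $p$ in some $J$.UF (possibly after a split), so $\sum_{I_p} N(I_p)$ is at most the total make-set count, $O(n(n+t))$. For $I = I_{p,q} \in \I_T$ I would invoke property (iv) of \cref{lem:algorithm}: equivalence classes only shrink as one moves right in partition-of-reals, so once the predicate ``$p$ and $q$ lie in different classes of $J$.UF'' first holds during the left-to-right iteration, it continues to hold at every subsequent $J' > J$. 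The iterated intervals thus split into a same-class prefix (no-op iterations) and a different-class suffix (each triggering a union). The total suffix length across all $I_{p,q}$ is bounded by the total number of unions, $O(n(n+t))$. Bounding the prefix is the delicate point; I would amortize each no-op iteration at $(I_{p,q},J)$ against either the most recent union in $J$.UF that first forced $[p]_J=[q]_J$ or the most recent split creating $J$, arguing that each such event absorbs only $O(1)$ no-ops over the lifetime of the algorithm. Summing all contributions would then yield the claimed $O(n(n+t)\alpha(n))$ bound.
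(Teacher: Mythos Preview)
Your proposal has a genuine gap at exactly the point you flag as ``the delicate point.'' The amortization you sketch for the no-op prefix does not work: a single union in $J$.UF can merge a set $\{p_1,\dots,p_k\}$ with a set $\{q_1,\dots,q_m\}$, and then every later interval $I_{p_i,q_j}$ that sweeps over $J$ will see $[p_i]_J=[q_j]_J$ and charge that same union. So one union may absorb $\Theta(km)$ charges, not $O(1)$. Charging to the split creating $J$ fails for the same reason, since arbitrarily many $I_{p,q}$ can later traverse that $J$ as a no-op. Without a correct bound on the prefix length you have no bound on $\sum_I N(I)$, and hence none on the total number of find-set calls.

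The paper avoids this difficulty entirely by proving a much sharper structural fact (its Claim~1): at the moment $I$ is processed, $I$ intersects at most $2n$ intervals of partition-of-reals, regardless of $t$. The argument assigns to each interval $K$ the quantity $\omega(K)=2\,|\text{points in }\Pp(K)|-|\text{sets in }\Pp(K)|$ and shows, using minimality of partition-of-reals together with Lemma~\ref{lem:algorithm}(iv), that $\omega$ is strictly increasing across consecutive intervals intersecting $I$; since $-2n+1\le\omega\le 0$, at most $2n$ such intervals exist. This immediately gives $N(I)\le 2n$ for every $I$, so the inner loops cost $O(n)$ per $I$ and $O(n(n+t))$ overall, with no amortization over different $I$'s needed. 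Your bound ``partition-of-reals has size $O(n+t)$'' is true but far too weak to yield this. A second, smaller gap: you invoke the standard $O(\alpha(n))$ amortized cost per union-find operation, but here the algorithm repeatedly clones union-find structures via \cref{alg:split}, so the textbook amortization does not apply directly; the paper needs an explicit potential-function argument to show the cloned structures start with bounded potential.
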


\begin{proof}[Proof of \cref{thm:algorithm_complexity}]
Suppose we are about to start the iteration of an $I\in\I$ on line 4 in \cref{alg:main}.
We begin by proving three claims.
\begin{claim}
\label{claim1}
$I$ intersects at most $2n$ intervals in partition-of-reals assuming $n\geq 1$.
\end{claim}

\begin{proof}
By \cref{lem:algorithm} (ii), for any consecutive intervals $J<J'$ in partition-of-reals, $\Pp(J)\neq\Pp(J')$.
They can differ in two ways (or both): (a) they are not partitions of the same set, or (b) there are $r,r'\in A$ that belong to the same set in one partition, but not the other.
Suppose $J$ intersects $I$.
Then, by the sorting of $\I$ by left endpoint, we cannot have that some element $s$ is in $\Pp(J')$, but not in $\Pp(J)$.
Thus, in case (a), $\Pp(J)$ has more points than $\Pp(J')$.
And by \cref{lem:algorithm} (iv), we can only have (b) if $r$ and $r'$ belong to the same set in $J$.UF, but not in $J'$.UF.
For any $K$ in partition-of-reals, let $\omega(K)$ be the number of points times two minus the number of sets in the partition stored by $K$.UF.
By our observations above, in both the cases (a) and (b), we have $\omega(J)<\omega(J')$, because we can go from $J$.UF to $J'$.UF by removing singleton sets and/or splitting sets (at least one such operation, possibly several), both of which decrease $\omega$.
Thus, since $-2n+1\leq \omega(K)\leq 0$ for all $K$ in partition-of-reals, $I$ intersects at most $2n$ intervals in partition-of-reals.
\end{proof}

\begin{claim}
\label{claim2}
For consecutive intervals $J<J'$ in partition-of-reals, $G(J,J')$ has at most $n$ edges.
\end{claim}

\begin{proof}
In what follows, we often view $G(J,J')$ as a graph with $\Pp(J)\sqcup \Pp(J')$ as its set of vertices instead of the roots representing the sets of the partitions.
By \cref{lem:algorithm} (iii), any edge in $G(J,J')$ between $X\in \Pp(J)$ and $Y\in \Pp(J')$ is witnessed by at least one element $p\in X\cap Y$.
Because $\Pp(J)$ and $\Pp(J')$ are partitions, each element $p\in A$ can witness at most one edge, so the claim follows.
\end{proof}

\begin{claim}
\label{claim3}
Suppose $I$ is of the form $I_{p,q}$.
Let $m$ be the number of neighbors in $G(J,J$.succ$)$ of $[p]_J$ summed over all $J$ intersecting $I$.
Then $m\leq 3n-1$.
\end{claim}

\begin{proof}
By \cref{claim1}, there are at most $2n$ edges of the form $[p]_J \leftrightarrow [p]_{J.\text{succ}}$ that count towards $m$.
Any edge not of this form is witnessed by a $r\in A$ with $[p]_J = [r]_J$ and $[p]_{J.\text{succ}} \neq [p]_{J.\text{succ}}$.
By \cref{lem:algorithm} (iv), each $r\in A$ can witness at most one such edge, which means that $m\leq 2n+n-1=3n-1$.
\end{proof}

Line 1 in \cref{alg:main} runs in $O((n+t)\log(n+t)) = O((n+t)\log(n)) = o((n+t)n)$, the first inequality following from $t=O(n^2)$.
We go through the for loop on line 4 in \cref{alg:main} $n+t$ times, once for each interval $I\in \I$.
By \cref{claim1}, we run \cref{alg:union} or UF.make-set at most $2n$ times, which includes up to $2n$ calls to UF.union and two calls to \cref{alg:split} in total, at most once per endpoint of $I$.
It also includes calling \cref{alg:delete} at most once, since the variable `first' is set to FALSE after the first time \cref{alg:delete} is called.
By \cref{claim2} and the fact that $G$($J$.pred, $J$) on line 3 is a bijection, \cref{alg:delete} runs in $O(n)$.
Excluding the $n$ calls to UF.find-set, \cref{alg:split} runs in $O(n)$, since the union find structures and graphs have size $O(n)$.
In \cref{alg:union}, we also update graphs $G(J$.pred$,J)$ and $G(J,J$.succ$)$ on lines 3 and 7.
We do this by merging the sorted linked lists of neighbors of $r$ and $r'$, which we do in $O(\ell)$, where $\ell$ is the sum of the lengths of the two lists.
By \cref{claim3}, the lengths of these lists in $G(J,J$.succ$)$ summed over all $J$ is $O(n)$.
In $G(J$.pred$,J)$, the number of neighbors is $1$ for each of $r$ and $r'$ for all except possibly the first $J$, where the number of neighbors is at most $n$ by \cref{claim2}.
Combining all this, the total length of the lists we need to merge in the iteration of $I$ is $O(n)$, and we can merge these in $O(n)$.
Thus, excluding $O((n+t)n)$ calls to UF.make-set, UF.union and UF.find-set, the algorithm runs in $O((n+t)n)$.

Next, we would like to argue that since UF.make-set, UF.union and UF.find-set run in $\alpha(n)$ time on average, the whole algorithm runs in $O((n+t)n\alpha(n))$.
However, since we juggle many union-find structures instead of working on a single structure throughout the algorithm, we cannot apply this fact directly.

Instead, we will apply ranks and potential functions $\Phi(x)$ for every element $x$ stored in a union-find structure $U$ for a more fine-grained analysis.
These are defined in \cite[Section 21.3 and 21.4]{cormen2022introduction}, where we find the following:
\begin{itemize}
\item[(i)] After make-set($x$), $x$.rank $=0$.
\item[(ii)] After union($x,y$), where $x$.rank $=0$ and $y$.rank is $0$ or $1$, we get $x$.rank $=0$ and $y$.rank $=1$.
\item[(iii)] If $x$ is a root or $x$.rank $=0$, then $\Phi(x) = \alpha(n)\cdot x$.rank.
\item[(iv)] If $x$ is the parent of another node, then $x$.rank $\geq 1$ (Lemma 21.4)
\end{itemize}
The precise values of the ranks and potential functions depend not only on $U$, but also on the sequence of operations used to reach $U$.

The only times we create a new union-find data structure are on line 3 in \cref{alg:main}, which creates a structure storing the empty partition, and in \cref{alg:split} on lines 8 and 9 (and 18, which is similar).
Fix a $J$ and let $U_0$ be $J$.UF when it is initialized, and $U_\infty$ be $J$.UF at the end of the algorithm or right before it is deleted.
In \cref{alg:split}, we initialize $J$.UF right after we have run UF.find-set($p$) for all $p\in A$, which causes all the trees in $U_0$ to have height at most $1$ (that is, all the non-root nodes have a root as their parent).
We can obtain $U_0$ by first calling make-set($p$) for every $p$ stored in the partition, then running union($p,r$) for every root $r$ in $J$.UF and $p$ with $r$ as its parent.
By the above facts (i) and (ii), this causes the roots representing sets of size at least $2$ to have rank $1$ and all other nodes to have rank $0$.
Thus, by fact (iii), the total potential $\Phi(U_0) = \sum_p \Phi(x)$ is $\sigma(U_0)$, which we define as the number of sets stored in $U_0$ of size at least $2$.
Observe that by fact (iv), $\Phi(J$.UF$) \geq \sigma(J$.UF$)$ at any point in the algorithm, and a single operation cannot decrease $\sigma(J$.UF$)$ by more than one.
We now apply \cite[Lemmas 21.11-13]{cormen2022introduction}, which say that the amortized cost of make-set, union and find-set are all at most $\alpha(n)$.
This implies that if we apply these operations $m$ times in total to $J$.UF, the total running time is $O(m\alpha(n)+\Phi(U_0)-\Phi(U_\infty))$.
We have $\Phi(U_0) = \sigma(U_0)$ and $\Phi(U_\infty)\geq \sigma(U_\infty) \geq \sigma(U_0) - m$, so the total runtime is $O(m\alpha(n)+\sigma(U_0)-\sigma(U_0) + m) = O(m\alpha(n))$.
Summing over all $m=n(n+t)$ calls to make-set, union and find-set, we get a cost of $O(n(n+t)\alpha(n))$.

Thus, the runtime of the whole algorithm is $O(n(n+t)(1+\alpha(n))) = O(n(n+t)\alpha(n))$.
\end{proof}

This complexity is in some sense essentially optimal:
Suppose a description of $\mathcal R(\I_A,\I_T)$ has a partition $\Q$ of $\R$ into intervals over which $\mathcal R(\I_A,\I_T)$ is constant, 
together with a set $S(J)$ for every $J\in \Q$ with an element for every connected component of $f^{-1}(J)$.
Then there are examples where $\Q$ has $O(n+t)$ elements, and the average size of $S(J)$ is $O(n)$, which yields a total description of size $\Omega(n(n+t))$.
Thus, apart from a negligible factor of $\alpha(n)$, \cref{thm:algorithm_complexity} cannot be improved without a more efficient way of storing $\mathcal R(\I_A,\I_T)$.

\bibliography{bibliography}

\end{document}